\let\theoremstyle\relax
\newtheorem{thm}{Theorem}
\newtheorem{lem}{Lemma}
\theoremstyle{definition}
\newtheorem{defin}{Definition}
\newcommand{\bmat}[1]{\begin{bmatrix}#1\end{bmatrix}}
\newcommand{\bmtx}{\begin{bmatrix}}
\newcommand{\emtx}{\end{bmatrix}}
\newcommand{\bsmtx}{\left[ \begin{smallmatrix}} 
\newcommand{\esmtx}{\end{smallmatrix} \right]}
\newcommand{\bmatarray}[1]{\left[\begin{array}{#1}}
\newcommand{\ematarray}{\end{array}\right]}
\newcommand{\field}[1]{\mathbb{#1}}
\newcommand{\R}{\field{R}}
\newcommand{\C}{\field{C}}
\newcommand{\LtwoT}{\mathcal{L}_2[0,T]}
\newcommand{\Sm}{\field{S}}
\newcommand{\Jdom}{\mathcal{D}}
\newcommand{\Jbnd}{J_{bnd}}
\newcommand{\Jopt}{J^*}
\newcommand{\E}{\mathcal{E}}
\newcommand{\lambdak}{\lambda^{(k)}}
\newcommand{\Lambdak}{\Lambda^{(k)}}
\newcommand{\gk}{g^{(k)}}
\begin{document}

\title{Trajectory-based Robustness Analysis for Nonlinear Systems} 

\author{Peter Seiler\thanks{Email: {\tt\small peter.j.seiler@gmail.com}}
 and Raghu Venkataraman\thanks{Email: {\tt\small veraghu@amazon.com}. }
\thanks{This work was funded by Amazon.com Services LLC.}}

\maketitle  

\begin{abstract}
  This paper considers the robustness of an uncertain nonlinear system
  along a finite-horizon trajectory. The uncertain system is modeled
  as a connection of a nonlinear system and a perturbation. The
  analysis relies on three ingredients.  First, the nonlinear system
  is approximated by a linear time-varying (LTV) system via
  linearization along a trajectory.  This linearization introduces an
  additional forcing input due to the nominal trajectory. Second, the
  input/output behavior of the perturbation is described by
  time-domain, integral quadratic constraints (IQCs). Third, a
  dissipation inequality is formulated to bound the worst-case
  deviation of an output signal due to the uncertainty. These steps
  yield a differential linear matrix inequality (DLMI) condition to
  bound the worst-case performance.  The robustness condition is then
  converted to an equivalent condition in terms of a Riccati
  Differential Equation.  This yields a computational method that
  avoids heuristics often used to solve DLMIs, e.g. time gridding. The
  approach is demonstrated by a two-link robotic arm example.
\end{abstract}

%-------------------------SECTION--------------------
\section{Introduction}

This paper develops theoretical and computational methods to analyze
the robustness of uncertain nonlinear systems over finite time
horizons.  Motivating applications include robotic systems
\cite{murray94}, space launch vehicles \cite{marcos09,biertumpfel21},
and aircraft during the landing phase \cite{theis18}. The analysis in
this paper considers an uncertain system modeled by an
interconnection of a (nominal) nonlinear, time-varying system and an
uncertainty.  The uncertainty can model dynamic or parametric
uncertainty. It can also model non-differentiable nonlinearities,
e.g. saturation or deadzone. The input-output properties of the
uncertainty are characterized by integral quadratic constraints (IQCs)
\cite{megretski97,veenman16}.  The main result in \cite{megretski97}
is an IQC stability theorem evaluated on an infinite time-horizon for
the case where the nominal system is linear and time-invariant (LTI).

This paper makes two contributions. The first contribution is a
theoretical condition to compute finite horizon robustness metrics for
uncertain nonlinear systems. The objective is to assess the worst-case
deviation from a nominal trajectory caused by the uncertainty.  This
analysis problem is approximated, via linearization along the
trajectory, by the following problem: compute the worst-case
$\mathcal{L}_2$-norm for an output of an uncertain LTV system over the
set of uncertainties (Section~\ref{sec:probform}).  The uncertain LTV
system includes a forcing due to the nominal trajectory.  This forcing
is addressed by using an augmented LTV system with a non-zero initial
condition as done previously in \cite{schweidel20,biertumpfel23}. The
main technical result (Theorem~\ref{thm:RobBound} in
Section~\ref{sec:wcnorm}) is a differential linear matrix inequality
(DLMI) condition to bound the worst-case deviation.  This theorem uses
standard dissipation inequality and IQCs results.

The ``tightest'' bound on worst-case deviation can be computed by a convex
optimization with DLMI constraints on a storage function matrix $P(t)$
and IQC variables. This involves infinite dimensional constraints due
to dependence on $t$ and a search over the space of differentiable
functions for $P(t)$.  A typical heuristic is to enforce the
constraints on a finite time grid and use a finite-dimensional
parameterization of $P(t)$ using basis functions.  This heuristic has
been used for both LTV
\cite{moore15,seiler17arXiv,seiler19,biertumpfel23} and linear
parameter varying (LPV) systems \cite{pfifer16IJRNC,wu96}. However
there are no formal guarantees, in general, with these approximations.

A second contribution of this paper is a computational algorithm to
assess the worst-case deviation that avoids these heuristics
(Section~\ref{sec:compalg}).  We instead convert to an equivalent
finite dimensional optimization.  This step relies on a connection to
Riccati Differential Equations (RDEs) using the LTV Bounded Real Lemma
\cite{tadmor90,ravi91,green95,chen00,basar08}.  This connection can
also be used to compute subgradients with minimal computational cost
(Section~\ref{sec:subgrad}). This builds on related prior work by
\cite{jonsson02,petersen00,kao01}. As a result the ellipsoid algorithm
can be used to solve the optimization to within a desired accuracy
(Section~\ref{sec:ellipalg}). This is efficient if the IQC is
parameterized by a small number of variables.  The approach is
demonstrated by analyzing the robustness of a two-link robot tracking
a desired trajectory (Section~\ref{sec:ex}).

Finite-horizon robustness of continuous-time LTV systems has also been
considered in \cite{jonsson02,petersen00} and more recently in
\cite{moore15,seiler17arXiv,seiler19,biertumpfel21}. There is also
related work in discrete-time, e.g. \cite{abou22,fry17,palframan17}.
These works mainly treat the uncertain LTV system as the starting
point for the analysis. One exception is \cite{biertumpfel23} which
considers the effect of parametric uncertainty in a nonlinear
model. The work in \cite{biertumpfel23} also uses linearization along
a nominal trajectory. It uses the basis function/time gridding
heuristic to compute bounds on the induced $\mathcal{L}_2$ gain of the
uncertain LTV approximation from an exogenous disturbance to an output
signal.  Our paper is similar but we use a different formulation for
the robustness metric and our computational algorithm avoids the use
of heuristic time gridding and basis functions.

\textbf{Notation:} $\mathbb{R}^n$ and $\C^n$ denote the sets of
$n$-by-$1$ real and complex vectors.  $\mathbb{R}^{n \times m}$ and
$\Sm^{n}$ denote the sets of $n$-by-$m$ real matrices and $n$-by-$n$
real, symmetric matrices. The $\mathcal{L}_2^n[0,T]$ norm of a signal
$v:[0,T] \rightarrow \R^n$ on a finite horizon $T<\infty$ is
$\|v\|_{2, [0,T]} := \left( \int_0^T v(t)^\top v(t) dt \right)^{1/2}$.
If $\|v\|_{2, [0,T]} < \infty$ then $v \in \LtwoT$. The superscript $n$ 
is omitted from $\mathcal{L}_2^n[0,T]$ when the dimension is clear.

% The $\Ltwo$ norm of a signal $v:\R \rightarrow \R^n$ is
% $\|v\|_2 := \left( \int_0^\infty v(t)^\top v(t) dt \right)^{1/2}$.  If
% $\|v\|_2 < \infty$ then $v \in \Ltwo$. Similarly, the $\LtwoT$ norm of
% a signal $v:[0,T] \rightarrow \R^n$ on a finite horizon $T<\infty$ is
% $\|v\|_{2, [0,T]} := \left( \int_0^T v(t)^\top v(t) dt \right)^{1/2}$.
% If $\|v\|_{2, [0,T]} < \infty$ then $v \in \LtwoT$.

% $\RL$ is the set of rational functions with real coefficients that
% have no poles on the imaginary axis.  $\RH \subset \RL$ contains
% functions that are analytic in the closed right-half of the complex
% plane.

%-------------------------SECTION--------------------
\section{Problem Formulation}
\label{sec:probform}

\subsection{Uncertain Nonlinear System}

Consider an uncertain system defined by the interconnection of a
time-varying, nonlinear system $G_{NL}$ and an uncertainty $\Delta$ as
shown in Figure~\ref{fig:GNLunc}. The system $G_{NL}$ is described by
the following state-space model:
\begin{equation}
  \label{eq:NLnom}
  \begin{split}
    \dot{x}(t) & = f(x(t),w(t),d(t),t) \\
    v(t) & = g_v(x(t),w(t),d(t),t) \\
    e(t) & = g_e(x(t),w(t),d(t),t) 
  \end{split}
\end{equation}
where $x(t) \in \R^{n_x}$ is the state at time $t$. The inputs of
$G_{NL}$ at time $t$ are $w(t)\in \R^{n_w}$ and $d(t)\in \R^{n_d}$
while $v(t)\in \R^{n_v}$ and $e(t)\in \R^{n_e}$ are outputs.  The
vector field $f:\R^{n_x \times n_w \times n_d \times 1}\to \R^{n_x}$
is assumed to be continuously differentiable.  Similarly, the output
mappings $g_v$ and $g_e$ are assumed to be continuously
differentiable. 

% XXX PJS -- Any other assumptions needed on f ,gv, and ge?

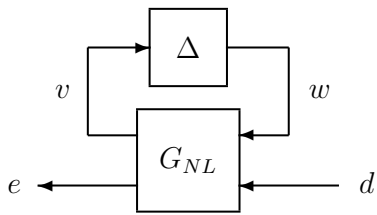
\begin{figure}[h!]
\centering
\scalebox{0.95}{
\begin{picture}(172,90)(23,20)
 \thicklines
 \put(75,25){\framebox(40,40){$G_{NL}$}}
 \put(163,32){$d$}
 \put(155,35){\vector(-1,0){40}}  
 \put(23,32){$e$}
 \put(75,35){\vector(-1,0){40}}  
% I/O for Delta
 \put(80,75){\framebox(30,30){$\Delta$}}
 \put(42,70){$v$}
 \put(55,55){\line(1,0){20}}  
 \put(55,55){\line(0,1){35}}  
 \put(55,90){\vector(1,0){25}}  
 \put(143,70){$w$}
 \put(135,90){\line(-1,0){25}}  
 \put(135,55){\line(0,1){35}}  
 \put(135,55){\vector(-1,0){20}}  
\end{picture}
} % End scalebox
\vspace{-.3cm}
\caption{Uncertain system defined by the interconnection of a
  time-varying nonlinear system $G_{NL}$ and uncertainty $\Delta$.}
\label{fig:GNLunc}
\end{figure}

% Well-posedness of the interconnection $F_u(G,\Delta)$ is defined as
% follows.
% \begin{defn}
% \label{def:wellpose}
% $F_u(G,\Delta)$ is \underline{well-posed} if for all
% $x_G(0)\in\R^{n_G}$ and $d\in \mathcal{L}_2^{n_d}[0,T]$ there exists
% unique solutions $x_G\in \mathcal{L}_2^{n_G}[0,T]$,
% $v\in \mathcal{L}_2^{n_v}[0,T]$, $e \in \mathcal{L}_2^{n_e}[0,T]$, and
% $w\in \mathcal{L}_2^{n_w}[0,T]$ satisfying Equation~\eqref{eq:LTVnom}
% and $w=\Delta(v)$ with a causal dependence on $d$.
% \end{defn}

The uncertainty is a causal operator
$\Delta:\mathcal{L}_{2}^{n_v}[0,T] \rightarrow
\mathcal{L}_{2}^{n_w}[0,T]$ that maps $v$ to $w$. It is assumed to be
an element of a set $\mathbf{\Delta}$ of block-structured
uncertainties as is standard in robust control \cite{zhou96}.  The
uncertainty $\Delta\in \mathbf{\Delta}$ can include blocks for
unmodeled dynamics, parametric variations, and/or infinite
dimensional operators (e.g. time delays). It can also include
non-differentiable nonlinearities, e.g. saturations, that are separate
from the differentiable nonlinearities in $G_{NL}$.  The term
``uncertainty'' is used for simplicity when referring to $\Delta$.

Next, assume an input $\bar{d}:[0,T]\to \R^{n_d}$ and initial condition
$x(0)=\bar{x}_0$ are given.  The nominal trajectory of the uncertain
system is obtained when $\Delta=0$. This yields $\bar{w}(t)=0$ while
the other nominal signals $(\bar{x},\bar{v},\bar{e})$, satisfy:
% The nominal trajectory, denoted
% $(\bar{x},\bar{v},\bar{e})$, is obtained when $\Delta=0$. Note that
% $\bar{w}(t)=0$ when $\Delta=0$. The other nominal signals satisfy:
\begin{align}
  \begin{split}
    \dot{\bar x}(t) & = f(\bar x(t),0,\bar d(t)) \\
    \bar v(t) & = g_v(\bar x(t),0,\bar d(t)) \\
    \bar e(t) & = g_e(\bar x(t),0, \bar d(t)).
  \end{split}
\end{align}
If $\Delta \ne 0$ then $w\ne 0$, in general. This will perturb the
uncertain system from the nominal trajectory giving new signals
$(x,v,e)$. These perturbed signals depend on the specific uncertainty
$\Delta \in \mathbf{\Delta}$. We assume that the nominal solution
exists on $[0,T]$ and also that the perturbed solutions exist on
$[0,T]$ for all $\Delta \in \mathbf{\Delta}$. The objective is to
bound (approximately) the worst-case deviation of the output signal
in the $\mathcal{L}_2[0,T]$ norm:
$\max_{\Delta \in \mathbf{\Delta}} \| e - \bar{e} \|_{2,[0,T]}$.

\subsection{Linearization Along Trajectory}
\label{sec:trajlin}

The approach taken here is to linearize the dynamics of $G_{NL}$
around the nominal trajectory (assuming $\bar{d}$ and $\bar{x}_0$ are
fixed):
\begin{align*}
  \begin{split}
    \dot{x}(t) & = \dot{\bar x}(t) 
       + A(t) \, (x(t)-\bar{x}(t)) + B(t) \, w(t) \\
    v(t) & = \bar v(t) + C_v(t) \, (x(t)-\bar{x}(t)) 
       + D_{vw}(t) \, w(t) \\
    e(t) & = \bar e(t) + C_e(t) \, (x(t)-\bar{x}(t)) 
       + D_{ew}(t) \, w(t).
  \end{split}
\end{align*}
The time-varying matrices are given by gradients evaluated along the
nominal trajectory, e.g.
$A(t):=\nabla_x f|_{(\bar{x}(t),0,\bar{d}(t))}$.  The linearization
can be re-written in perturbation coordinates: $\delta_x:=x-\bar{x}$,
$\delta_v:=v-\bar{v}$, and $\delta_e:=e-\bar{e}$. This yields a
linear time-varying (LTV) approximation, $G_{LTV}$:
\begin{align}
  \label{eq:Gltv}
  \begin{split}
    \dot{\delta}_x(t) & = 
       A(t) \, \delta_x(t) + B(t) \, w(t) \\
    \delta_v(t) & = C_v(t) \, \delta_x(t)
       + D_{vw}(t) \, w(t) \\
    \delta_e(t) & = C_e(t) \, \delta_x(t)
       + D_{ew}(t) \, w(t).
  \end{split}
\end{align}
The uncertainty does not effect the initial condition, i.e. the
perturbed and nominal trajectories have the same initial condition
$x(0)=\bar{x}_0$. Hence the initial condition of $G_{LTV}$ is
$\delta_x(0)=0$.

Figure~\ref{fig:GLTVunc} shows the linearized approximation for the
original uncertain system. The nonlinear system $G_{NL}$ is replaced
by its linearization $G_{LTV}$.  The nominal trajectory affects the
linearized model in two ways. First, the state-matrices of
\eqref{eq:Gltv} depend on $(\bar{x},\bar{d})$. Second, the input to
$\Delta$ is $v=\bar{v}+\delta_v$ and includes forcing due to nominal
input $\bar{v}$.  If $\Delta \ne 0$ then the nominal signal $\bar{v}$
will force $G_{LTV}$ via $w$ thus generating a perturbed output
$\delta_e \ne 0$.

% Note that there is no forcing to $G_{LTV}$ when $\Delta=0$.  Thus
% $\Delta=0$ (and $\delta_x(0)=0$) imply that $\delta_e=0$, i.e. there
% is no perturbation from the nominal trajectory.  However, if
% $\Delta \ne 0$ then the nominal input $\bar{v}$ will force $G_{LTV}$
% via $w$ thus generating a perturbed output $\delta_e \ne 0$.

\begin{figure}[h!]
\centering
\scalebox{0.95}{
%\begin{picture}(140,100)(5,0)
\begin{picture}(125,85)(20,20)
 \thicklines
 \put(75,25){\framebox(40,40){$G_{LTV}$}}
% delta_e only
 \put(24,31){$\delta_e$}
 \put(75,35){\vector(-1,0){40}}  
% % delta_e, e, and ebar
%  \put(45,35){\circle{6}}  
%  \put(42,35){\vector(-1,0){27}}  
%  \put(7,32){$e$}
%  \put(45,15){\vector(0,1){17}}  
%  \put(42,5){$\bar{e}$}
%  \put(58,22){$\delta_e$}
%  \put(75,35){\vector(-1,0){27}}  
%  \put(45,35){\circle{6}}  
%  \put(42,35){\vector(-1,0){27}}  
%  \put(7,32){$e$}
%  \put(45,15){\vector(0,1){17}}  
%  \put(42,5){$\bar{e}$}
% delta_v, v, and vbar
 \put(42,55){$\delta_v$}
 \put(75,55){\line(-1,0){20}}  
 \put(55,55){\vector(0,1){32}}  
 \put(23,87){$\bar{v}$}
 \put(32,90){\vector(1,0){20}}  
 \put(55,90){\circle{6}}  
 \put(58,90){\vector(1,0){22}}  
 \put(65,95){$v$}
% w back to plant
 \put(80,75){\framebox(30,30){$\Delta$}}
 \put(141,70){$w$}
 \put(135,90){\line(-1,0){25}}  
 \put(135,55){\line(0,1){35}}  
 \put(135,55){\vector(-1,0){20}}  
\end{picture}
} % End scalebox
\vspace{-.3cm}
\caption{Uncertain system defined by the interconnection of an LTV
  system $G_{LTV}$ and uncertainty $\Delta$ including forcing due to
  $\bar{v}$.}
\label{fig:GLTVunc}
\end{figure}
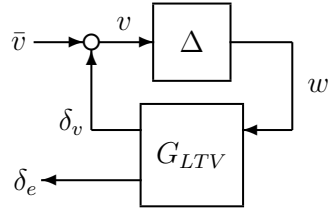

The precise problem addressed by this paper is to compute a bound on
the worst-case deviation for the uncertain system in
Figure~\ref{fig:GLTVunc}:
\begin{align}
  \max_{\Delta \in \mathbf{\Delta}} \| \delta_e \|_{2,[0,T]} 
\end{align}
The analysis is based on the linearized model $G_{LTV}$. Hence it
assumes that $w$ is sufficiently small that the higher-order terms
dropped in the linearization are negligible.  It is possible to bound
the effect of linearization errors as in
\cite{takarics15,biertumpfel23} but we will not do so here. Another
assumption, implicit in Figure~\ref{fig:GNLunc}, is that the
uncertainty $\Delta$ enters in a rational, i.e. feedback, form. This
is sufficient for many types of unmodeled dynamics, delays, and
non-differentiable nonlinearities.  However, parametric uncertainty
often appears in a non-rational form in a nonlinear model.  Parametric
uncertainties are treated in \cite{biertumpfel23} via linearization
along a nominal trajectory. The approach in \cite{biertumpfel23} could
be combined with the method in this paper but, again, this is not
pursued.

\section{Worst-Case Norm}
\label{sec:wcnorm}

\subsection{Augmented System}

We'll focus on the worst-case analysis problem formulated in
Section~\ref{sec:trajlin} with the linearized dynamics.  The first
step is to incorporate the effect of $\bar{v}$ into an augmented LTV
system with state $x_a:= \bsmtx \delta_x \\ 1 \esmtx \in \R^{n_x+1}$.
The linearized approximation in Figure~\ref{fig:GLTVunc} is equivalent
to the interconnection of $w=\Delta(v)$ and the following LTV system:
\begin{align}
  \label{eq:Ga}
  \begin{split}
    \dot{x}_a(t) & =  A_a(t) \, x_a(t) + B_a(t) \, w(t) \\
    v(t) & = C_{v,a}(t) \, x_a(t) + D_{vw}(t) \, w(t) \\
    \delta_e(t) & = C_{e,a}(t) \, x_a(t) + D_{ew}(t) \, w(t) \\
      x_a(0) & = \bsmtx 0 \\ 1 \esmtx,
  \end{split}
\end{align}
where the augmented state matrices are defined as:
\begin{align*}
& A_a(t) := \bmtx A(t) & 0 \\ 0 & 0 \emtx,
\,\,
B_a(t):=  \bmtx B(t) \\ 0 \emtx, \\
& C_{v,a}(t):= \bmtx C_v(t) & \bar{v} \emtx,
\,\,
C_{e,a}(t) := \bmtx C_e(t) & 0 \emtx.
\end{align*}

The augmented system \eqref{eq:Ga}, denoted $G_a$, has a non-zero
initial condition. The definition of $A_a$ and $B_a$ ensure that the
last (scalar) entry of $x_a$ is identically equal to 1 for all
solutions. Thus the output equation for $v(t)$ includes the effect of
the nominal signal $\bar{v}$. This augmented state method was used in
\cite{schweidel20,biertumpfel23} for similar trajectory-based
linearizations.  The offset $\bar{v}$ is needed to ensure that
the linearization provides an accurate approximation.

One technical issue is that the uncertain LTV system could be
ill-posed.  Specifically, $w=\Delta(v)$ and the output equation
$v=C_{v,a}x_a+D_{vw}w$ may involve an algebraic equation. For example,
if the uncertainty is a time-varying matrix $\Delta(t)$ then the
algebraic equation at each time is:
\begin{align}
   (I-D_{vw}(t)\Delta(t)) \, v(t) = C_{v,a}(t) \, x_a(t)
\end{align}
This algebraic equation will have no solutions or non-unique solutions
if $(I-D_{vw}(t)\Delta(t))$ is singular. Such ill-posed
cases can also occur when $\Delta$ is not necessarily a time-varying
matrix. The uncertain system is said to be well-posed if such cases do not
occur as formally defined next.

\begin{defin}
\label{def:wellposed}
Consider the uncertain system defined by the interconnection of $G_a$
in \eqref{eq:Ga} and the set of causal uncertainties
$\mathbf{\Delta}$. The uncertain system is well-posed if for each
$\Delta \in \mathbf{\Delta}$ there exists unique solutions
$(x_a,v,w,\delta_e) \in \mathcal{L}_2^{n_x+1}[0,T]$ satisfying
Equation~\eqref{eq:Ga} with $x_a(0)=\bsmtx 0 \\ 1 \esmtx$ and
$w=\Delta(v)$.
\end{defin}

% Well-posedness of the interconnection
% $F_u(G,\Delta)$ is defined as follows.

% \begin{defin}
% \label{def:wellposed}
% The interconnection of the LTV $G_a$ and $\Delta$ is well-posed if
% there exists unique solutions $x_a\in \mathcal{L}_2^{n_x+1}[0,T]$,
% $v\in \mathcal{L}_2^{n_v}[0,T]$,
% $\delta_e \in \mathcal{L}_2^{n_e}[0,T]$, and
% $w\in \mathcal{L}_2^{n_w}[0,T]$ satisfying Equation~\eqref{eq:Ga}
% (with $x_a(0)=\bsmtx 0 \\ 1 \esmtx$) and $w=\Delta(v)$.
% \end{defin}

\subsection{Integral Quadratic Constraints (IQCs)}

The next step is to bound the input/output behavior of
$\Delta\in \mathbf{\Delta}$.  We will use the class of time-domain
Integral Quadratic Constraints (IQCs) \cite{megretski97} defined
below for this step.

\begin{defin}
  \label{def:tdiqc}
  % A bounded causal operator:  XXX - Do we needed boundedness?
  A causal operator
  $\Delta:\mathcal{L}_2^{n_v}[0,T] \rightarrow
  \mathcal{L}_2^{n_w}[0,T]$ satisfies the time-domain IQC
  defined by $M\in \Sm^{n_v+n_w}$ if the following inequality holds
  $\forall v \in \mathcal{L}_2^{n_v}[0,T]$ and $w=\Delta(v)$:
  \begin{align}
    \label{eq:tdiqc}
    \int_0^T \bmtx v(t) \\ w(t) \emtx^\top M 
        \bmtx v(t) \\ w(t) \emtx  \, dt \, \ge 0.
  \end{align}
\end{defin}
\vspace{0.05in} 

% The notation $\Delta \in \mathcal{I}(M)$ is used when $\Delta$
% satisfies the IQC defined by $M$.  

% An uncertainty that is passive satisfies the IQC defined by
% $M = \bsmtx 0 & I \\ I & 0 \esmtx$.

Time-domain IQCs can be used to bound the effect of various
uncertainties \cite{megretski97}.  For example, consider an
uncertainty that is norm-bounded in the induced $\mathcal{L}_2$ norm:
$\|\Delta\|_{2\to 2}\le \beta$ for some $\beta<\infty$. This
uncertainty satisfies the IQC defined by
$M = \bsmtx \beta^2 I & 0 \\ 0 & -I \esmtx$. As another example, a
memoryless nonlinearity in the sector $[\alpha,\beta]$ satisfies the
IQC defined by
$M = \bsmtx -2\alpha\beta & \alpha+\beta \\ \alpha+\beta & -2 \esmtx$.
An important point is that IQCs can be combined to form new IQCs.
Suppose $\Delta$ satisfies the IQCs defined by
$\{ M_i \}_{i=1}^{m}\subset \Sm^{n_v+n_w}$. Then $\Delta$ also
satisfies the IQC defined by $M(\lambda):=\sum_{i=1}^m \lambda_i M_i$
for any $\lambda_i \ge 0$.

Time domain IQCs, as defined above, are a special case of more general
(and powerful) IQCs given in the literature. The integrand in
\eqref{eq:tdiqc} is a quadratic function of the input/output signals
$(v,w)$. These are called non-dynamic IQCs.  This is in contrast with
dynamic IQCs that express the integrand as a quadratic function of
filtered signals of $(v,w)$. Moreover, Definition~\ref{def:tdiqc}
requires the constraint to hold over the finite time horizon
$T>0$. These are often referred to as hard IQCs
\cite{megretski97}. This is in contrast to soft IQCs that only hold,
in general, on an infinite horizon. The algorithm in this paper can be
adapted to handle dynamic, hard IQCs with mainly notational changes.

% \pjs{Comment on soft IQCs?  The theory probably caries through,
%   e.g. using Carsten's lower bounding method. However, I'm not sure if
%   the gradient descent algorithm goes through.}

% The linearized system to be analyzed is shown in
% Figure~\ref{fig:GLTVunc}.  The analysis is performed by replacing the
% set of possible dynamics, $w=\Delta(v)$ for
% $\Delta\in\mathbf{\Delta}$, with the constraint on $(v,w)$ defined by
% $\Delta \in \mathcal{I}(M)$.  In other words, the IQC implicitly
% constrains the input $w$ such that the constrained system
% without $\Delta$ includes all behaviors of the linearized system
% with $\Delta$.

\subsection{Condition to Bound the Worst-Case Deviation}

The next theorem gives a condition to bound the worst-case
deviation $\max_{\Delta \in \mathbf{\Delta}} \|\delta_e\|_{2,[0,T]}$.
The proof uses IQCs and a standard dissipation
argument~\cite{schaft99, willems72a, willems72b, khalil01}.

\begin{thm}
\label{thm:RobBound}
Assume the uncertain system defined by the interconnection of $G_a$ in
\eqref{eq:Ga} and the set of causal, uncertainties $\mathbf{\Delta}$
is well-posed. Furthermore, assume each $\Delta \in \mathbf{\Delta}$
satisfies the IQCs defined by
$\{ M_i \}_{i=1}^{m}\subset \Sm^{n_v+n_w}$.

If there exist non-negative scalars $\{\lambda_i\}_{i=1}^m$ and a
differentiable function $P:[0,T]\to \Sm^{n_x+1}$ such that
$P(T)\succeq 0$ and\footnote{The notation $(\cdot)^\top$ in
  \eqref{eq:DLMI} corresponds to a factor than can be determined
  from symmetry and hence is omitted.}
\begin{align}
\label{eq:DLMI}
&\bsmtx \dot{P} + A_a^\top P + P A_a  &  P B_a  \\    B_a^\top P & 0  \esmtx
+ (\cdot)^\top % \bmtx C_{a,e}^\top \\ D_{ew}^\top \emtx 
\bsmtx C_{a,e} & D_{ew} \esmtx  \\
\nonumber
& 
+ \sum_{i=1}^m \lambda_i \,
(\cdot)^\top % \bmtx C_{a,v}^\top & 0 \\  D_{vw}^\top & I \emtx  
M_i \bsmtx C_{a,v} & D_{vw} \\ 0 & I \esmtx  
\prec 0 \hspace{0.2in} \forall t\in [0,T]
\end{align}
then
\begin{align} 
  \label{eq:wcnorm}
  \max_{\Delta \in \mathbf{\Delta}} \|\delta_e\|_{2,[0,T]} \le
  \left[ \bsmtx 0\\ 1\esmtx^\top P(0) \bsmtx 0\\ 1 \esmtx \right]^{\frac{1}{2}}.
\end{align}
\end{thm}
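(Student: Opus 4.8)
The plan is to run a standard dissipation argument using $V(t,x_a) := x_a^\top P(t)\, x_a$ as a storage function, and to absorb the uncertainty via the IQCs. Fix any $\Delta \in \mathbf{\Delta}$ and, by well-posedness, let $(x_a,v,w,\delta_e)$ be the unique $\mathcal{L}_2[0,T]$ solution of \eqref{eq:Ga} with $x_a(0)=\bsmtx 0\\1\esmtx$ and $w=\Delta(v)$. First I would left- and right-multiply the DLMI \eqref{eq:DLMI} by $\bsmtx x_a(t)^\top & w(t)^\top\esmtx$ and its transpose. Using $\dot{x}_a = A_a x_a + B_a w$, the first block contributes exactly $\frac{d}{dt}\big(x_a^\top P x_a\big)$; the second block contributes $\delta_e^\top \delta_e$ (from $C_{e,a}x_a + D_{ew}w = \delta_e$); and the $i$-th IQC block contributes $\lambda_i \bsmtx v\\w\esmtx^\top M_i \bsmtx v\\w\esmtx$ (using $C_{v,a}x_a + D_{vw}w = v$ and the trailing identity row that produces the $w$ component). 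Strict negative definiteness of the DLMI then gives the pointwise inequality
\begin{align}
  \frac{d}{dt}\Big( x_a(t)^\top P(t)\, x_a(t) \Big) + \delta_e(t)^\top \delta_e(t) + \sum_{i=1}^m \lambda_i \bmtx v(t)\\w(t)\emtx^\top M_i \bmtx v(t)\\w(t)\emtx < 0
\end{align}
for all $t \in [0,T]$ (with $\le$ sufficing for the final bound).

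Next I would integrate this inequality over $[0,T]$. The middle term integrates to $\|\delta_e\|_{2,[0,T]}^2 \ge 0$, and each IQC term integrates to a nonnegative quantity by Definition~\ref{def:tdiqc} together with $\lambda_i \ge 0$, so both can be dropped after moving them to the other side. The first term integrates to $x_a(T)^\top P(T)\, x_a(T) - x_a(0)^\top P(0)\, x_a(0)$. Using $P(T)\succeq 0$ gives $x_a(T)^\top P(T) x_a(T) \ge 0$, so it too can be dropped, yielding
\begin{align}
  \|\delta_e\|_{2,[0,T]}^2 \;\le\; x_a(0)^\top P(0)\, x_a(0) \;=\; \bmtx 0\\1\emtx^\top P(0) \bmtx 0\\1\emtx.
\end{align}
Taking square roots and then the supremum over $\Delta \in \mathbf{\Delta}$ (the right-hand side is independent of $\Delta$) gives \eqref{eq:wcnorm}.

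The routine part is verifying the block-multiplication bookkeeping — in particular checking that the augmented matrices $C_{e,a}, C_{v,a}, D_{ew}, D_{vw}$ and the trailing $\bsmtx 0 & I\esmtx$ row in the IQC term reproduce exactly $\delta_e$ and $\bsmtx v\\w\esmtx$ when contracted against $\bsmtx x_a\\w\esmtx$. The main subtlety to be careful about is the regularity needed to justify the integration step: one needs $t \mapsto x_a(t)^\top P(t) x_a(t)$ to be absolutely continuous so that the fundamental theorem of calculus applies, which follows from $P$ being differentiable (hence $C^1$ on the compact $[0,T]$) and $x_a$ being an $\mathcal{L}_2$ solution of the linear ODE \eqref{eq:Ga} with $w \in \mathcal{L}_2[0,T]$, so $x_a$ is absolutely continuous with $\dot{x}_a \in \mathcal{L}_2[0,T]$; well-posedness is exactly what guarantees such a solution exists and is unique. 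No strictness is actually exploited in the final inequality, so the strict DLMI in \eqref{eq:DLMI} gives some margin, which is what makes the condition amenable to the numerical search described later.
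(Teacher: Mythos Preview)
Your proposal is correct and follows essentially the same approach as the paper: define the quadratic storage $V=x_a^\top P x_a$, contract the DLMI against $\bsmtx x_a^\top & w^\top\esmtx$ to obtain the pointwise dissipation inequality, integrate over $[0,T]$, and discard the nonnegative terminal and IQC terms. Your additional remarks on the block bookkeeping and on the regularity needed for the fundamental theorem of calculus are accurate and go slightly beyond what the paper spells out.
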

\begin{proof}
  Define a storage function $V:\R^{n_x+1}\times\R\to \R$ by
  $V(x_a,t) := x_a^\top P(t) x_a$.  Consider any
  $\Delta \in \mathbf{\Delta}$.  By well-posedness, the uncertain
  system with $\Delta$ and $x_a(0)=\bsmtx 0 \\ 1 \esmtx$ has a unique
  solution $(x_a,v,w,\delta_e)$.  Left and right multiply
  \eqref{eq:DLMI} by $[x_a^\top, w^\top]$ and its transpose to show that
  $V$ satisfies the dissipation inequality 
  $\forall t\in [0,T]$:
  \begin{align*}
    \dot{V} + \delta_e^\top \delta_e 
    + \sum_{i=1}^m \lambda_i  \bsmtx v \\ w \esmtx^\top M_i \bsmtx v \\ w \esmtx
    \le 0 %-\epsilon \, \bsmtx x_a \\ w \esmtx^\top \bsmtx x_a \\ w \esmtx
  \end{align*}
  Integrate over $[0,T]$ to obtain:
  \begin{align*}
    &  V(x_a(T),T) - V(x_a(0),0) 
     + \|\delta_e\|^2_{2,[0,T]} \\
    & + \sum_{i=1}^m \lambda_i \int_{0}^{T}  \bmtx v(t) \\ w(t) \emtx^\top 
              M_i  \bmtx v(t) \\ w(t) \emtx dt 
    \le 0.
  \end{align*}
  Apply $P(T) \succeq 0$, $x_a(0)=\bsmtx 0 \\ 1 \esmtx$, and the IQCs
  defined by $\{M_i\}_{i=1}^m$ to conclude:
  \begin{align}
    \|\delta_e\|_{2,[0,T]}^2 
   \le \bsmtx 0\\ 1\esmtx^\top P(0) \bsmtx 0\\ 1\esmtx.
  \end{align}
  This inequality holds for all $\Delta\in\mathbf{\Delta}$ and hence this
  yields the  bound in \eqref{eq:wcnorm}.
\end{proof}

The inequality in Equation~\ref{eq:DLMI} is compactly denoted as
$DLMI(t,P,\lambda)\prec 0$.  This notation emphasizes that the
constraint is a time-dependent, differential linear matrix inequality
(DLMI) in $(P,\lambda)$.  The dependence on the state matrices of
$G_a$ and the IQC matrices $\{M_i\}_{i=1}^n$ is not explicitly denoted
but will be clear from context. 

The tightest upper bound on the worst-case deviation, based on
Theorem~\ref{eq:DLMI}, is obtained by solving the following
optimization:
\begin{align}
\label{eq:DLMIOptim}
\Jopt = &  \min_{\lambda\ge 0, \, P} \, \bsmtx 0\\ 1\esmtx^\top P(0) \bsmtx 0\\ 1\esmtx \\
\nonumber
&  \mbox{subject to:} \,\,\, P(T) \succeq 0, \\
\nonumber
&    DLMI(t,P,\lambda) \prec 0 \,\,\, \forall t \in[0,T] 
\end{align}
The worst-case deviation is upper bounded by $\sqrt{J_{bnd}}$.
The optimization involves convex constraints on the optimization
variables $\lambda$ and $P$. Moreover, the cost is a linear (and hence
convex) function of $P$. Thus Equation~\ref{eq:DLMIOptim} is a
(convex) semidefinite program (SDP). However, there are two main issues with
solving this optimization. First, the DLMI corresponds to an infinite
number of constraints since it must hold for all $t\in [0,T]$.
Second, the optimization requires a search over the space of
differentiable functions $P:[0,T]\to \Sm^{n_x+1}$.  A heuristic
approach to approximately solve this optimization involves
\cite{moore15,seiler19,biertumpfel23}: (i) enforcing the DLMI on a
finite time grid, and (ii) restricting $P$ to a linear combination of
differentiable basis functions. These approximations yield a finite
dimensional optimization but provides no guarantees on the solution
accuracy.

%-------------------------SECTION--------------------
\section{Computational Algorithm}
\label{sec:compalg}

This section presents a computational method to convert the
optimization \eqref{eq:DLMIOptim} to an equivalent finite-dimensional
optimization. This enables solutions via cutting plane methods without
resorting to time-griding or basis functions.

\subsection{Finite-Dimensional Optimization}

The first step is to define a function $J$ involving the minimization
over $P$ for a fixed $\lambda \in \R^m$:
\begin{align}
\label{eq:defJ}
J(\lambda) :=  & 
\min_{P} \bsmtx 0\\ 1\esmtx^\top P(0) \bsmtx 0\\ 1\esmtx \\
\nonumber
&  \mbox{subject to:} \,\,\, P(T) \succeq 0, \\
\nonumber
&    DLMI(t,P,\lambda) \prec 0 \,\,\, \forall t \in[0,T]
\end{align}
If the optimization is infeasible for a given $\lambda$ then
$J(\lambda)=+\infty$.  Define the domain of $J$ as
$\Jdom:=\{ \lambda \in \R^m \, : \, J(\lambda)<\infty \}$.  It follows
from the linear constraints and cost of \eqref{eq:defJ} that
$J:\Jdom\to \R$ is a convex function and the domain $\Jdom$ is a
convex set.  The infinite dimensional optimization
\eqref{eq:DLMIOptim} can re-written in terms of $J$ as follows:
\begin{align}
  \label{eq:minJ}
\Jopt =  & \min_{\lambda \in \Jdom ,\lambda\ge 0} J(\lambda) 
%  \\ \nonumber
%  & \mbox{subject to: } \lambda \in D
\end{align}
Equation~\ref{eq:minJ} is, formally, a finite-dimensional convex
optimization with decision variables $\lambda\in \R^m$. However, evaluating
definition of $J$ in \eqref{eq:defJ} still involves a minimization over
$P$ with the time-dependent DLMI constraint. 

Next, we show that $J(\lambda)$ can be evaluated directly without
explicitly performing the minimization over $P$.  To simplify
notation, consider the DLMI in \eqref{eq:DLMI} and define $(Q,S,R)$ as
follows:
\begin{align}
\label{eq:QSR}
& \bmtx Q & S \\ S^\top & R \emtx :=
\bmtx Q_0 & S_0 \\ S_0^\top & R_0 \emtx
+ \sum_{i=1}^m \lambda_i \, \bmtx Q_i & S_i \\ S_i^\top & R_i \emtx
\end{align}
where:
\begin{align}
\label{eq:QSRi}
\begin{split}
\bsmtx Q_0 & S_0 \\ S_0^\top & R_0 \esmtx 
& := \bsmtx C_{a,e}^\top \\ D_{ew}^\top \esmtx 
      \bsmtx C_{a,e} & D_{ew} \esmtx  \\
\bsmtx Q_i & S_i \\ S_i^\top & R_i \esmtx 
  & := \bsmtx C_{a,v} & D_{vw} \\ 0 & I \esmtx^\top
 M_i \bsmtx C_{a,v} & D_{vw} \\ 0 & I \esmtx.
\end{split}
\end{align}
The matrices are defined by the appropriate block partitioning.  Here
$(Q,S,R)$ are functions of $(t,\lambda)$ and
$\{ (Q_i,S_i,R_i) \}_{i=1}^m$ are only functions of $t$. The DLMI in
\eqref{eq:DLMI} can thus be expressed as:
\begin{align}
\label{eq:DLMIqsr}
\hspace{-0.1in}
\bmtx \dot{P} + A_a^\top P + P A_a  &  P B_a  \\    B_a^\top P & 0  \emtx
+ \bmtx Q & S \\ S^\top & R \emtx \prec 0 
\end{align}
The matrices in the DLMI can be used to define a related
Riccati Differential Equation (RDE):
\begin{align}
\label{eq:RDE}
\begin{split}
& \dot{Y} + A_a^\top Y + Y A_a + Q \\
& \hspace{0.2in} - (Y B_a + S) R^{-1} (Y B_a + S)^\top = 0
\end{split}
\end{align}
This is compactly denoted as $RDE(t,Y,\lambda)=0$. The next
theorem states that $J(\lambda)$ can be evaluated from the solution to
this RDE.
% The solution, when it exists, is unique.

% In theorem below, condition 2 should be stated with uniqueness
% of the RDE solution, i.e. There exists a unique differentiable...?
\begin{thm}
  \label{thm:Jeval}
  Assume $(A_a,B_a,Q,R,S)$ are all continuous functions of time.
  Moreover, assume $\lambda \in\R^m$ is given and $R(t,\lambda) <0$
  for all $t \in [0,T]$.  Then the following are equivalent:
  \begin{enumerate}
  \item $\lambda \in \Jdom$, i.e. $J(\lambda) < \infty$.
  \item There exists a differentiable function
    $P:[0,T]\to \Sm^{n_x+1}$ that satisfies $P(T)\succeq 0$ and
    $DLMI(t,P,\lambda)\prec 0$ for all $t \in [0,T]$.
  \item There exists a differentiable function
    $Y:[0,T]\to \Sm^{n_x+1}$ that satisfies $Y(T)=0$ and 
    $RDE(t,Y,\lambda) = 0$ for all $t \in [0,T]$.
  \end{enumerate}
  Moreover, if the conditions hold then:
  \begin{align}
    \label{eq:JfromY}
    J(\lambda)=\bsmtx 0 \\ 1\esmtx^\top Y(0) \bsmtx 0 \\ 1 \esmtx.
  \end{align}
\end{thm}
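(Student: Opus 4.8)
The plan is to prove the three-way equivalence by establishing (1)$\Leftrightarrow$(2) and (2)$\Leftrightarrow$(3), and then to read off the value formula from the argument used for the latter. The equivalence (1)$\Leftrightarrow$(2) is immediate from the definitions: by \eqref{eq:defJ} and the definition of $\Jdom$, the statement $J(\lambda)<\infty$ holds exactly when the constraint set of \eqref{eq:defJ} is nonempty, i.e. when some differentiable $P$ satisfies $P(T)\succeq 0$ and $DLMI(t,P,\lambda)\prec 0$ on $[0,T]$, which is condition (2). The real content is (2)$\Leftrightarrow$(3), a finite-horizon Bounded Real Lemma relating solvability of the strict DLMI to solvability of the RDE on the entire interval $[0,T]$ with terminal condition $Y(T)=0$; the standing hypothesis $R(t,\lambda)\prec 0$ is what makes the Schur complement of the DLMI block well defined and is the finite-horizon analogue of the usual feedthrough assumption. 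For this step I would lean on the RDE theory in \cite{tadmor90,ravi91,green95,chen00,basar08}.

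For (3)$\Rightarrow$(2) I would complete the square. Taking the Schur complement of the block matrix in \eqref{eq:DLMIqsr} with respect to its $(2,2)$ block $R\prec 0$, the DLMI at $P=Y$ reduces to $\dot{Y} + A_a^\top Y + Y A_a + Q - (Y B_a + S) R^{-1}(Y B_a + S)^\top \prec 0$; substituting the RDE solution $Y$ makes the left side $0$, so $DLMI(t,Y,\lambda)\preceq 0$ (non-strict) and $Y(T)=0\succeq 0$. To obtain a strictly feasible $P$, let $Y_\varepsilon$ solve the RDE with $Q$ replaced by $Q+\varepsilon I$ and $Y_\varepsilon(T)=0$. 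For $\varepsilon>0$ small, continuous dependence of the RDE solution on its data, together with compactness of $[0,T]$ and the uniform sign $R\prec 0$, guarantees that $Y_\varepsilon$ still exists on all of $[0,T]$; its Schur complement equals $-\varepsilon I\prec 0$, so $DLMI(t,Y_\varepsilon,\lambda)\prec 0$ and $Y_\varepsilon(T)=0\succeq 0$, establishing (2).

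For (2)$\Rightarrow$(3), the harder direction, suppose $P$ is differentiable with $P(T)\succeq 0$ and $DLMI(t,P,\lambda)\prec 0$. The RDE solution $Y$ with $Y(T)=0$ exists a priori only on a maximal subinterval $(\tau,T]$, and I must show $\tau<0$. Taking Schur complements of the strict DLMI shows $P$ is a strict supersolution of the RDE, and $Y(T)=0\preceq P(T)$, so the Riccati comparison principle gives $Y(t)\preceq P(t)$ on $(\tau,T]$. For a lower bound, note $R^{-1}\prec 0$ makes $-(Y B_a + S)R^{-1}(Y B_a + S)^\top\succeq 0$, so $\dot{Y} + A_a^\top Y + Y A_a + Q\preceq 0$; comparing with the linear Lyapunov differential equation $\dot{Z} + A_a^\top Z + Z A_a + Q=0$, $Z(T)=0$ (which exists on all of $[0,T]$) yields $Z(t)\preceq Y(t)$ on $(\tau,T]$ by a standard transition-matrix/Gronwall argument. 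Thus $Y$ is sandwiched between the continuous functions $Z$ and $P$ on $(\tau,T]$, hence remains bounded and cannot have finite escape time; so $\tau<0$ and $Y$ exists on $[0,T]$, which is (3). This no-escape/comparison argument is the main obstacle — it is precisely the mechanism behind the LTV Bounded Real Lemma — and I would either reproduce the short transition-matrix computation or cite \cite{green95,basar08}.

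Finally, for the value formula \eqref{eq:JfromY}: the inequality $Y(t)\preceq P(t)$ derived in the (2)$\Rightarrow$(3) step holds for \emph{every} $P$ feasible in \eqref{eq:defJ}, so evaluating at $t=0$ and pre/post-multiplying by $\bsmtx 0 \\ 1 \esmtx$ gives $\bsmtx 0 \\ 1 \esmtx^\top Y(0)\bsmtx 0 \\ 1 \esmtx\le J(\lambda)$. For the reverse, the perturbed solutions $Y_\varepsilon$ from the (3)$\Rightarrow$(2) step are feasible in \eqref{eq:defJ}, so $J(\lambda)\le \bsmtx 0 \\ 1 \esmtx^\top Y_\varepsilon(0)\bsmtx 0 \\ 1 \esmtx$; letting $\varepsilon\to 0$ with $Y_\varepsilon(0)\to Y(0)$ (again continuous dependence) yields $J(\lambda)\le \bsmtx 0 \\ 1 \esmtx^\top Y(0)\bsmtx 0 \\ 1 \esmtx$. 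Combining the two inequalities gives \eqref{eq:JfromY}; note that the infimum in \eqref{eq:defJ} is in general not attained, since the DLMI is strict, but its value is still $\bsmtx 0 \\ 1 \esmtx^\top Y(0)\bsmtx 0 \\ 1 \esmtx$.
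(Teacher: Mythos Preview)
Your proposal is correct and follows essentially the same route as the paper: (1)$\Leftrightarrow$(2) by definition, (2)$\Leftrightarrow$(3) via Schur complement and the LTV Bounded Real Lemma, and the value formula by Riccati comparison ($Y\preceq P$ for every feasible $P$) together with an $\epsilon$-perturbation of the RDE and continuous dependence. The only cosmetic difference is that you unpack the no-finite-escape sandwich argument for (2)$\Rightarrow$(3) where the paper simply cites the Bounded Real Lemma, and your perturbation $Q\mapsto Q+\epsilon I$ is exactly the paper's $RDE(t,Y_\epsilon,\lambda)=-\epsilon I$; the paper's Appendix Lemmas~\ref{lemApp:RDEmon} and~\ref{lemApp:RDEcont} are precisely the comparison and continuity facts you invoke.
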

\begin{proof}
  The equivalence of 1 and 2 is a consequence of the definition of $J$
  in \eqref{eq:defJ}. The remainder of the proof shows that 2 and 3
  are equivalent.  

  Condition 2 holds if and only if there exists $\epsilon_1>0$ such
  that the following Riccati Differential Inequality (RDI) holds
  for all $t\in [0,T]$:
  \begin{align}
    \label{eq:RDI}
    \begin{split}
    & \dot{P} + A_a^\top P + P A_a + Q \\
    & - (P B_a + S) R^{-1} (P B_a + S)^\top  + \epsilon_1 \cdot I\prec 0 
    % &  \hspace{0.1in} \forall t \in [0,T]
    \end{split}
  \end{align}
  % \begin{align}
  %   \label{eq:RDI}
  %   & \dot{P} + A_a^\top P + P A_a + Q \\
  %   \nonumber
  %   & - (P B_a + S) R^{-1} (P B_a + S)^\top \prec 0
  %     \hspace{0.1in} \forall t \in [0,T]
  % \end{align}
  This follows from the Schur complement lemma \cite{boyd94} and
  $R(t,\lambda) <0$.\footnote{The interval $[0,T]$ is compact since
    $T<\infty$. Hence a strict matrix inequality $M(t)<0$
    $\forall t\in [0,T]$ holds if and only if $\exists \epsilon_1>0$
    such that $M(t)+\epsilon_1 \cdot I <0$ $\forall t\in [0,T]$.}  The
  Bounded Real Lemma for LTV systems
  \cite{tadmor90,ravi91,green95,chen00,basar08} states that there
  exists a differentiable function $P$ satisfying $P(T)\succeq 0$ and
  \eqref{eq:RDI} if and only if Condition 3 holds.  The precise
  version of the LTV Bounded Real Lemma used here is Theorem 1 in
  \cite{seiler17arXiv,seiler19}.

  % \footnote{Inequality \eqref{eq:RDI} has the form
  %   $L(t,P,\lambda) \prec 0$ $\forall t \in [0,T]$.  The standing
  %   assumption that $T<\infty$ implies $[0,T]$ is a compact interval.
  %   Hence \eqref{eq:RDI} holds if and only if there exists
  %   $\epsilon>0$ such that $L(t,P,\lambda) \preceq -\epsilon I$
  %   $\forall t \in [0,T]$.  Thus Theorem 1 in
  %   \cite{seiler17arXiv,seiler19}, which is stated with
  %   $\prec -\epsilon I$, can be applied.}

  % Next, we show....
  % This effectively follows from Theorem 2 of Poubelle, etal '86TAC.

  To conclude the proof we assume the conditions hold and show
  \eqref{eq:JfromY} is true.  As noted above, if $P$ satisfies the
  DLMI then $P$ satisfies the RDI in \eqref{eq:RDI}. Thus there exists
  $W:[0,T]\to \Sm^{n_x+1}$ such that $W(t)\prec 0$ for all
  $t\in [0,T]$ and:
  \begin{align}
    \label{eq:RDEW}
    & \dot{P} + A_a^\top P + P A_a + Q \\
    \nonumber
    & - (P B_a + S) R^{-1} (P B_a + S)^\top = W
      \hspace{0.1in} \forall t \in [0,T]
  \end{align}
  This is an RDE with a perturbation $W$ on the right side.  We denote
  \eqref{eq:RDEW} by $RDE(t,P,\lambda)=W$.  It follows from
  Lemma~\ref{lemApp:RDEmon} in the Appendix that $P(0)\succeq Y(0)$.
  This inequality holds for any $P$ that satisfies Condition 2 so that
  $J(\lambda) \ge \bsmtx 0 \\ 1 \esmtx^\top Y(0) \bsmtx 0 \\ 1
  \esmtx$.

  Next, if $P$ satisfies Condition 2 then it satisfies the RDI in
  \eqref{eq:RDI}.  This implies (via the equivalence of Conditions 2
  and 3 to this perturbed RDI) that there exists a differentiable
  function $Y_1:[0,T]\to \Sm^{n_x+1}$ that satisfies $Y_1(T)= 0$ and
  $RDE(t,Y_1,\lambda)=-\epsilon_1 \cdot I$. Define
  $\epsilon_k=\frac{1}{k}\epsilon_1$ for $k=2,3,\ldots$ and let
  $Y_k:[0,T]\to \Sm^{n_x+1}$ be the solution to $Y_k(T)= 0$ and
  $RDE(t,Y_k,\lambda)=-\epsilon_k \cdot I$. It follows from
  Lemma~\ref{lemApp:RDEcont} in the Appendix that
  $\lim_{k\to \infty} \|Y_k(0) - Y(0)\|=0$.  Moreover, each
  $\{Y_k\}_{k=1}^\infty$ satisfies the DLMI by the Schur complement
  lemma.  Thus the optimization in \eqref{eq:defJ} has feasible points
  arbitrarily close to $Y(0)$ and hence
  $J(\lambda) = \bsmtx 0 \\ 1 \esmtx^\top Y(0) \bsmtx 0 \\ 1 \esmtx$.
\end{proof}

% The tightest upper bound on the worst-case deviation is obtained by
% minimizing $J(\lambda)$ subject to $\lambda \ge 0$ and
% $\lambda \in \Jdom$. This is a convex optimization. 

% If there is only one IQC ($m=1$) then we can solve for $\lambda$ using
% a Golden Section Search.  This is a zeroth-order method, i.e. it only
% requires function evaluations but not evaluations of the gradient of
% $J$.  However, the computational efficiency can be improved by using a
% first-order method. The next result shows that gradients of $J$ can be
% evaluated by solving a related Lyapunov Differential Equation.

\subsection{Subgradients}
\label{sec:subgrad}

By Theorem~\ref{thm:Jeval}, $J$ can be evaluated directly from the
solution of a RDE.  There is no need to approximate the optimization
by enforcing the DLMI on a time grid and using bases functions for
$P$.  This section demonstrates that subgradients on $J$ can be
computed with small additional computation. The subgradients are
evaluated based on a linear quadratic (LQ) optimization defined in
terms of $(Q,S,R)$ from \eqref{eq:QSR}:
\begin{align}
\nonumber
  & \max_{w\in \mathcal{L}_2[0,T]} 
  \int_0^T \bsmtx x_a(t) \\ w(t) \esmtx^\top
  \bsmtx Q(t,\lambda) & S(t,\lambda) \\ 
  S(t,\lambda)^\top & R(t,\lambda) \esmtx
  \bsmtx x_a(t) \\ w(t) \esmtx \, dt \\
\label{eq:LQcost}
  & \mbox{subject to:} \,\,
  \dot{x}_a = A_a \, x_a + B_a \, w, \,\,
   x_a(0)=\bsmtx 0 \\ 1 \esmtx
\end{align}
The next theorem is a variation of the cutting plane results in
\cite{jonsson02,kao01}.

\begin{thm}
\label{thm:Jsubgrad}
If $\lambda \in \Jdom$ then:
\begin{enumerate}
\item The optimal cost for \eqref{eq:LQcost} is equal to
  $J(\lambda)$.
\item The optimal cost is achieved by $(x_a^*,w^*)$ satisfying the
  following with $x_a^*(0) =\bsmtx 0 \\ 1 \esmtx$:
  \begin{align}
    \nonumber
    \dot{x}_a^* & := \left(A_a-B_a R^{-1} \left( Y B_a +S \right)^\top \right)
                  x_a \\
    \label{eq:xawstar}
    w^* & := -R^{-1} \left( Y B_a +S \right)^\top x_a^*,
  \end{align}
  where $Y$ is the solution to $RDE(t,Y,\lambda)=0$ with $Y(T)=0$.
\item $J$ satisfies the subgradient inequality:
  \begin{align}
    \label{eq:Jsubgrad}
    J(\alpha) \ge J(\lambda) + g^\top (\alpha-\lambda)
    \,\,\, \forall \alpha \in \R^m
  \end{align}
  where $g\in \R^m$ is defined by: 
  \begin{align}
    \label{eq:gi}
    g_i :=   \int_0^T \bsmtx x_a^*(t) \\ w^*(t) \esmtx^\top
    \bsmtx Q_i(t) & S_i(t) \\ S_i(t) ^\top & R_i(t) \esmtx
    \bsmtx x_a^*(t) \\ w^*(t) \esmtx \, dt 
  \end{align}
\end{enumerate}
\end{thm}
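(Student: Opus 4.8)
The three claims of Theorem~\ref{thm:Jsubgrad} are really one computation viewed from three angles, and the natural route is to exploit the standard completion-of-squares identity that links the LQ problem \eqref{eq:LQcost} to the Riccati solution $Y$ from Theorem~\ref{thm:Jeval}. First I would establish Claim~1 and Claim~2 together. Since $\lambda\in\Jdom$, Theorem~\ref{thm:Jeval} guarantees a differentiable $Y$ solving $RDE(t,Y,\lambda)=0$ with $Y(T)=0$, and the hypothesis there also gives $R(t,\lambda)<0$ on $[0,T]$ (this is the sign that makes the LQ problem a genuine \emph{maximization} with a finite optimum rather than $+\infty$). The key step is to differentiate $x_a(t)^\top Y(t) x_a(t)$ along any trajectory of $\dot x_a = A_a x_a + B_a w$, substitute the RDE for $\dot Y$, and complete the square in $w$. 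This yields the identity
\begin{align*}
 \frac{d}{dt}\bigl(x_a^\top Y x_a\bigr)
 = -\bsmtx x_a \\ w \esmtx^\top \bsmtx Q & S \\ S^\top & R \esmtx \bsmtx x_a \\ w \esmtx
   + \bigl(w - w^\star\bigr)^\top R \,\bigl(w - w^\star\bigr),
\end{align*}
where $w^\star := -R^{-1}(YB_a+S)^\top x_a$. Integrating over $[0,T]$, using $Y(T)=0$ and $x_a(0)=\bsmtx 0\\1\esmtx$, the LQ cost equals $\bsmtx 0\\1\esmtx^\top Y(0)\bsmtx 0\\1\esmtx$ minus $\int_0^T (w-w^\star)^\top(-R)(w-w^\star)\,dt$; since $-R\succ 0$, the cost is maximized exactly when $w=w^\star$ pointwise, which (substituting back into the state equation) is precisely \eqref{eq:xawstar}, and the optimal value is $\bsmtx 0\\1\esmtx^\top Y(0)\bsmtx 0\\1\esmtx = J(\lambda)$ by \eqref{eq:JfromY}. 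One technical point to address is that $w^\star\in\mathcal{L}_2[0,T]$: this follows because $Y$, $B_a$, $S$, $R^{-1}$ are continuous on the compact interval $[0,T]$ and the closed-loop $x_a^\star$ is the solution of a linear ODE with continuous coefficients, hence bounded.

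For Claim~3, the idea is that $J(\lambda)$, being the optimal value of \eqref{eq:LQcost}, is a pointwise supremum of functions that are \emph{affine} in $\lambda$, and such a supremum is convex with the obvious subgradient. Concretely, fix the optimal pair $(x_a^\star,w^\star)$ for the parameter $\lambda$, and for any $\alpha\in\R^m$ let $\phi(\beta)$ denote the value of the integral cost in \eqref{eq:LQcost} evaluated at this \emph{fixed} trajectory $(x_a^\star,w^\star)$ but with the weighting matrix $(Q(t,\beta),S(t,\beta),R(t,\beta))$. By the decomposition \eqref{eq:QSR}, $\phi$ is affine in $\beta$ with $\phi(\beta)=\phi(\lambda)+g^\top(\beta-\lambda)$ where $g_i$ is exactly the expression in \eqref{eq:gi}. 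Now $\phi(\lambda)=J(\lambda)$ by Claims~1--2, while $\phi(\alpha)\le J(\alpha)$ because $(x_a^\star,w^\star)$ is merely a feasible (not necessarily optimal) trajectory for the problem with parameter $\alpha$ --- here feasibility is immediate since the constraint $\dot x_a=A_ax_a+B_aw$, $x_a(0)=\bsmtx 0\\1\esmtx$ does not depend on the parameter. Chaining these gives $J(\alpha)\ge \phi(\alpha)=J(\lambda)+g^\top(\alpha-\lambda)$, which is \eqref{eq:Jsubgrad}. A small caveat: if $J(\alpha)=+\infty$ the inequality is trivially true, so the argument covers all $\alpha\in\R^m$ as stated.

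**Main obstacle.** The routine parts are the completion-of-squares algebra and the affine-supremum observation. The step requiring the most care is making sure the LQ maximum is actually \emph{attained} (not just a supremum) and that the optimal input is a legitimate $\mathcal{L}_2[0,T]$ signal --- in other words, ruling out a spurious $+\infty$ and justifying that no ``escape to infinity'' in $x_a$ occurs over the finite horizon. This hinges on $-R(t,\lambda)\succ 0$ uniformly on the compact interval (available from the hypothesis of Theorem~\ref{thm:Jeval} via continuity, which gives a uniform lower bound $-R(t,\lambda)\succeq \mu I$ with $\mu>0$) together with the global existence of the RDE solution $Y$ on all of $[0,T]$, which is precisely what $\lambda\in\Jdom$ buys us through Theorem~\ref{thm:Jeval}. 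Once attainment is secured, everything else is bookkeeping.
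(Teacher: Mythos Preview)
Your proposal is correct and follows essentially the same approach as the paper. The only difference is expository: for Claims~1 and~2 the paper simply invokes ``a standard result in LQ optimal control'' (citing \cite{anderson07} and \cite{jonsson02}) for the identification of the optimal cost with $\bsmtx 0\\1\esmtx^\top Y(0)\bsmtx 0\\1\esmtx$ and the optimal feedback \eqref{eq:xawstar}, whereas you spell out the underlying completion-of-squares identity and the attainment argument; for Claim~3 both you and the paper use the same feasible-but-suboptimal argument together with the affine dependence \eqref{eq:QSR}, including the trivial case $J(\alpha)=+\infty$.
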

\begin{proof}
  By Theorem~\ref{thm:Jeval}, if $\lambda \in \Jdom$ then there exists
  a solution $Y$ to the RDE with boundary condition $Y(T)=0$.
  Moreover,
  $J(\lambda)=\bsmtx 0 \\ 1\esmtx^\top Y(0) \bsmtx 0 \\ 1 \esmtx$.  It
  is a standard result in LQ optimal control that the optimal cost to
  \eqref{eq:LQcost} is
  $\bsmtx 0 \\ 1\esmtx^\top Y(0) \bsmtx 0 \\ 1 \esmtx$ achieved by
  $(x_a^*,w^*)$ as defined in Statement 2. See Chapter 2 of
  \cite{anderson07} or Proposition 8 of \cite{jonsson02}.

  Next, consider Statement 3.  If $\alpha \notin \Jdom$ then
  $J(\alpha)=\infty$ and hence \eqref{eq:Jsubgrad} holds trivially.
  Thus consider $\alpha \in \Jdom$. The pair $(x_a^*,w^*)$ is optimal
  for the LQ optimization defined with $\lambda$.  It provides
  a lower bound on the maximal cost of the LQ optimization defined
  by any other $\alpha \in \Jdom$:
  \begin{align}
    \label{eq:Jsubgrad2}
    J(\alpha) \ge 
    \int_0^T \bsmtx x_a^*(t) \\ w^*(t) \esmtx^\top
    \bsmtx Q(t,\alpha) & S(t,\alpha) \\ 
    S(t,\alpha)^\top & R(t,\alpha) \esmtx
    \bsmtx x_a^*(t) \\ w^*(t) \esmtx \, dt 
  \end{align}
  Finally, it follows from 
  the definition of $Q$ in \eqref{eq:QSR} that
  $Q(t,\alpha)=Q(t,\lambda)+\sum_{i=1}^m (\alpha_i-\lambda_i)Q_i$.
  Similar relationships hold for $S$ and $R$. Thus
  \eqref{eq:Jsubgrad2} can be equivalently written
  as \eqref{eq:Jsubgrad}.
\end{proof}

By Theorem~\ref{thm:Jsubgrad}, a subgradient for $J$ at
$\lambda \in \Jdom$ can be evaluated using the solution $Y$ of the
RDE. First, the signals $(x_a^*,w^*)$ are obtained by solving the
dynamics \eqref{eq:xawstar} from the initial condition
$x_a^*(0) =\bsmtx 0 \\ 1 \esmtx$. Second, the subgradient $g$ is then
obtained by performing the integrals in \eqref{eq:gi}. These
two steps have a small computational cost relative to cost of solving
the RDE itself.

If $\lambda \notin \Jdom$ then we can also construct a $g \in \R^m$
that separates $\lambda$ from the feasible set
$\Jdom$~\cite{jonsson02,kao01}. The construction is summarized here.
If $\lambda \notin \Jdom$ then the RDE does not have a solution on
$[0,T]$. Specifically, the solution $Y$ to $RDE(Y,t,\lambda)=0$ grows
unbounded when integrated backward from $Y(T)=0$. Thus the solution
exists only $(t_0,T]$ for some $t_0\in (0,T)$.  In this case, there
exists non-trivial signals $(x_a^*,w^*)$ on $[t_0,T]$ that satisfy:
\begin{align}
  \label{eq:xawstarzeroIC}
\begin{split}
  &   \int_{t_0}^T \bsmtx x_a^*(t) \\ w^*(t) \esmtx^\top
  \bsmtx Q(t,\lambda) & S(t,\lambda) \\ 
  S(t,\lambda)^\top & R(t,\lambda) \esmtx
  \bsmtx x_a^*(t) \\ w^*(t) \esmtx \, dt  = 0 \\
  & \dot{x}_a^* = A_a x_a^* + B_a w^* 
    \, \mbox{ with }  x_a(t_0)=0 
\end{split}
\end{align}
A numerical implementation for this construction is given in
\cite{iannelli18}.\footnote{Briefly, integrate $Y$ backward to
  $Y(t_0+\epsilon)$ for some sufficiently small $\epsilon>0$.  Let
  $\rho_\epsilon$ be the spectral radius of $Y(t_0+\epsilon)$ and note
  that $\rho_\epsilon\to \infty$ as $\epsilon\to 0$.  Let $v_\epsilon$
  be the corresponding eigenvector of $Y(t_0+\epsilon)$ associated
  with $\rho_\epsilon$. Solve the dynamics \eqref{eq:xawstar} with
  initial condition $x_a^*(0)=\rho_\epsilon^{-1} v_\epsilon$ to obtain
  $(x_a^*,w^*)$.}  We can use this pair $(x_a^*,w^*)$ to construct $g$
from \eqref{eq:gi}. By linearity, this pair satisfies the following
for any $\alpha \in \R^m$:
\begin{align*}
  & \int_{t_0}^T \bsmtx x_a^*(t) \\ w^*(t) \esmtx^\top
  \bsmtx Q(t,\alpha) & S(t,\alpha) \\ 
  S(t,\alpha)^\top & R(t,\alpha) \esmtx
    \bsmtx x_a^*(t) \\ w^*(t) \esmtx \, dt  \\
   &  \hspace{1in} = g^\top (\alpha - \lambda)
\end{align*}
Thus if $g^\top (\alpha-\lambda)\ge0$ then $(x_a^*,w^*)$ are non-trivial
signals that yield non-negative LQ cost from $x_a^*(0)=0$. The RDE
fails to exist in this case by the LTV Bounded Real Lemma (Theorem 1
in \cite{seiler19}). Hence the feasible set satisfies
$\Jdom \subset \{ \alpha \, : \, g^\top (\alpha - \lambda)<0 \}$ In
other words, $g$ defines a hyperplane that separates
$\lambda \notin \Jdom$ from the feasible set $\Jdom$.

\subsection{Solution Via Ellipsoidal Algorithm}
\label{sec:ellipalg}

Assume the convex optimization \eqref{eq:minJ},
$\min_{\lambda\in \Jdom,\lambda\ge 0} J(\lambda)$, is feasible with
optimal point $\lambda^*$ and optimal cost $\Jopt=J(\lambda^*)$.
Algorithm~\ref{alg:EA} provides pseudo-code to solve this using the
ellipsoidal algorithm (Section 14.4 of \cite{boyd91}).

% This algorithm computes a sequence of ellipsoids that converges to the
% optimal point under some technical conditions.

The algorithm computes a sequence of ellipsoids defined with a center
$\lambda \in \R^n$ and shape matrix $\Lambda \succ 0$ as follows:
\begin{align*}
\E(\Lambda,\lambda) := \{ \alpha \in \R^n \, : \,
    (\alpha - \lambda)^\top \Lambda^{-1} (\alpha - \lambda) \le 1 \}
\end{align*}
The algorithm is initialized with a sphere of radius $R>0$ centered at
the origin: $\Lambda^{(0)} = R^2 \cdot I$ and $\lambda^{(0)} =0$.
Assume that $\lambda^* \in \Lambda^{(0)}$. 

For each step $k=0,1,\ldots$, the cost is evaluated at the ellipsoid
center $J(\lambdak)$. Moreover, a vector $\gk\in \R^n$ is
computed that is either a subgradient at $\lambdak$, if feasible,
or separates $\lambdak$ from the infeasible set. Additional
details on the computation of $\gk$ are given below.  The optimal
point lies in the intersection of $\E(\Lambdak,\lambdak)$
and the half space
$\mathcal{H}(\gk,\lambdak):=\{ \alpha \, : \, (\gk)^\top
(\alpha - \lambdak)<0 \}$. 

The ellipsoid algorithm computes the smallest ellipsoid that contains
$\E(\Lambdak,\lambdak) \cap
\mathcal{H}(\gk,\lambdak)$.  The updated ellipsoid is defined
in terms of the normalized vector
$\tilde{g} := \frac{1}{\sqrt{(\gk)^\top A \gk}} \gk$:
\begin{align}
\label{eq:EMupdate}
\lambda^{(k+1)} & := \lambdak - \frac{1}{m+1} \Lambdak \tilde{g} \\
\nonumber
\Lambda^{(k+1)} & :=\frac{m^2}{m^2-1} \left( \Lambdak
   - \frac{2}{m+1} \Lambdak \tilde{g} \tilde{g}^\top \Lambdak \right)
\end{align}
If $\lambda^* \in \E(\Lambda^{(0)},\lambda^{(0)})$ then the optimal
point remains in the ellipsoid at each iteration. In addition, if
$\lambdak$ is feasible then the optimality gap is bounded by:
\begin{align}
  \label{eq:Jbnd}
  \begin{split}
    &  J(\lambdak) - \Jopt \le  J_{bnd} \\
    & \mbox{where } J_{bnd}:= \sqrt{ (\gk)^\top \Lambdak \gk}
  \end{split}
\end{align}
The ellipsoids shrink by a fixed factor $e^{-\frac{1}{2m}}$ at each
iteration.  If the optimization \eqref{eq:minJ} is feasible and
$\lambda^* \in \E(\Lambda^{(0)},\lambda^{(0)})$ then the algorithm
converges $J(\lambdak) \to \Jopt$.  Details for these
facts are given in Section 14.4 of \cite{boyd91}.

Finally, we discuss the construction of $\gk$ based on 
three possible cases:
\begin{itemize}
\item \emph{Case 1:} The current ellipsoid center $\lambdak$ has at
  least one negative entry, $\lambda_i^{(k)}<0$ for some $i$. Thus the
  ellipsoid center is infeasible as it does not satisfy
  $\lambdak \ge 0$. In this case $J(\lambdak)=+\infty$ and any
  feasible point must satisfy $\alpha_i \ge 0 > \lambda_i^{(k)}$.  Set
  $\gk=-e_i$ where $e_i\in \R^m$ is the $i^{th}$ basis vector.  Thus
  the feasible set lies in the half space $\mathcal{H}(\gk,\lambdak)$.
\item \emph{Case 2:} If the current ellipsoid center satisfies
  $\lambdak \ge 0$ then integrate the RDE backward from $Y(T)=0$.  If
  the solution exists on $[0,T]$ then
  $J(\lambdak) = \bsmtx 0 \\ 1 \esmtx^\top Y(0) \bsmtx 0 \\ 1 \esmtx$.
  Moreover, a subgradient $\gk$ can be computed from
  Theorem~\ref{thm:Jsubgrad}.  The set of feasible points with
  strictly lower cost lies in $\mathcal{H}(\gk,\lambdak)$.
\item \emph{Case 3:} Suppose the current ellipsoid center satisfies
  $\lambdak \ge 0$ but the RDE fails to exist on $[0,T]$. Then the
  ellipsoid center is infeasible because $\lambdak \notin \Jdom$.  As
  discussed in Section~\ref{sec:subgrad}, a separating hyperplane
  $\gk$ can be constructed from \eqref{eq:gi} and a pair $(x_a^*,w^*)$
  that satisfy \eqref{eq:xawstarzeroIC}. The feasible set lies in the
  half space $\mathcal{H}(\gk,\lambdak)$.
\end{itemize}

% If there is only one IQC ($m=1$) then we can solve
% $\min_{\lambda\in \Jdom,\lambda\ge 0} J(\lambda)$ using bisection.

% The RDE conditions for robust induced $\mathcal{L}_2$ and
% $\mathcal{L}_2$-to-Euclidean gains do not require the constraint and
% basis function approximations needed for the corresponding DLMI.
% Specifically for any $(G,\Psi,M,\gamma^2)$ the RDE can be integrated
% within a specified numerical accuracy using standard ODE solvers.  If
% the RDE exists on $[0,T]$ when integrated backward from $Y(T)=F$ then
% the robust gain is less than $\gamma$.  Bisection on $\gamma$ can be
% used to find the smallest bound on the robust gain. 

\begin{algorithm}
  \linespread{1}\selectfont
  \caption{Ellipsoidal Algorithm} \label{alg:EA}  
  \begin{algorithmic}[1]
    \State \textbf{Given:} $G_a$, $\epsilon_{tol}>0$,  $R>0$
    \State \textbf{Initialize:} $\Lambda^{(0)}=R^2\cdot I$, $\lambda^{(0)}=0$,
      $\Jbnd=\infty$, $k=0$    
    \While{$\Jbnd>\epsilon_{tol}$}
    \If{$\lambda_i^{(k)}<0$ for some $i$} 
    \State $J(\lambdak)=\infty$
    \State $\gk=-e_i$ \, ($e_i$ is $i^{th}$ basis vector).
    \Else
    \State Solve $RDE(t,Y,\lambda)=0$; $Y(T)=0$.
    \If{$Y$ exists on $[0,T]$}
    \State $J(\lambdak) = \bsmtx 0 \\ 1\esmtx^\top Y(0) \bsmtx 0 \\ 1\esmtx$
    \State Compute $\gk$ from Theorem~\ref{thm:Jsubgrad}.
    \Else
    \State $J(\lambdak)=\infty$
    \State Compute $\gk$ from \eqref{eq:xawstarzeroIC} and \eqref{eq:gi}
    \EndIf
    \EndIf 
    \State Sub-optimality:
    $\Jbnd:=\sqrt{ (\gk)^\top \Lambdak \gk}$

    \State Compute $(\Lambda^{(k+1)},\lambda^{(k+1)}$) from \eqref{eq:EMupdate}
    \State Update iteration number: $k=k+1$
    \EndWhile
    %\EndFor
  \end{algorithmic}
\end{algorithm}

%-------------------------SECTION--------------------
\section{Example}
\label{sec:ex}

This example considers the robustness of a two link robot arm
(Figure~\ref{fig:twoLinkRobot}) as it traverses a finite-time
trajectory. This example was previously used to study related
finite-horizon robustness issues in
\cite{seiler19,seiler17arXiv,moore15}.  The mass and moment of inertia
of the $i$-th link are denoted by $m_i$ and $I_i$.  The robot
properties are $m_1=3kg$, $m_2 = 2kg$, $l_1 = l_2= 0.3m$,
$r_1=r_2 = 0.15m$, $I_1= 0.09 kg\cdot m^2$, and
$I_2= 0.06 kg\cdot m^2$.  The nonlinear equations of motion (Section
2.3 of \cite{murray94}) are given by:
\begin{align}
\label{eq:linkDyns}
&  \bmat{\alpha+ 2\beta\cos(\theta_2) & \delta + \beta \cos(\theta_2) \\
    \delta +  \beta \cos(\theta_2) & \delta}
  \bmat{\ddot{\theta}_1 \\ \ddot{\theta}_2} + \\
\nonumber
&  \bmat{-\beta \sin(\theta_2) \dot{\theta}_2 &
    -\beta \sin(\theta_2) (\dot{\theta}_1 + \dot{\theta}_2) \\
    \beta \sin(\theta_2) \dot{\theta}_1 & 0} \bmat{\dot{\theta}_1 \\
    \dot{\theta}_2} =\bmat{\tau_1 \\ \tau_2}  
\end{align}
with
\begin{align*}
& \alpha := I_{1} + I_{2} + m_1r_1^2 
+ m_2(l_1^2 + r_2^2) = 0.443    \, kg \cdot m^2\\
& \beta := m_2 l_1 r_2 = 0.09 \, kg \cdot m^2 \\
& \delta := I_{2} + m_2 r_2^2 = 0.105 \, kg \cdot m^2.
\end{align*}
The state and input are
$\eta :=[\theta_1  \ \theta_2  \ \dot{\theta}_1 \
  \dot{\theta}_2]^T$ and $\tau :=[\tau_1 \ \tau_2]^T$,
  where $\tau_i$ is the torque applied to the base of link $i$.  A
trajectory $\bar{\eta}$ was selected for the arm and the required
input torque $\bar{\tau}$ was computed.
Figure~\ref{fig:LinkRobotFigure2} shows the desired trajectory for the
tip of arm two (red dashed line) in Cartesian coordinates from $t=0$
to $T=5$ sec.  The arm positions at four different times are
also shown. 

\begin{figure}[t] 
  \centering
  \includegraphics[scale=0.3]{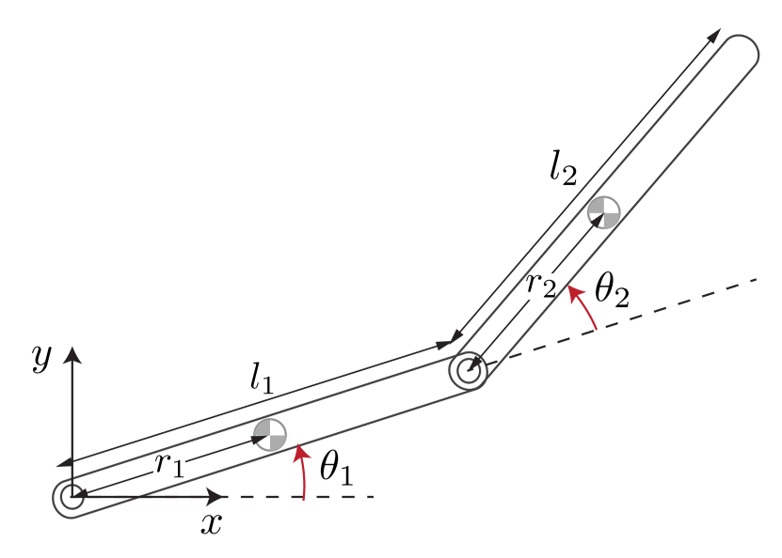}
  \caption{Two link robot arm \cite{murray94}.}
  \label{fig:twoLinkRobot}
\end{figure}

\begin{figure}[t] 
  \centering
  \includegraphics[scale=0.5]{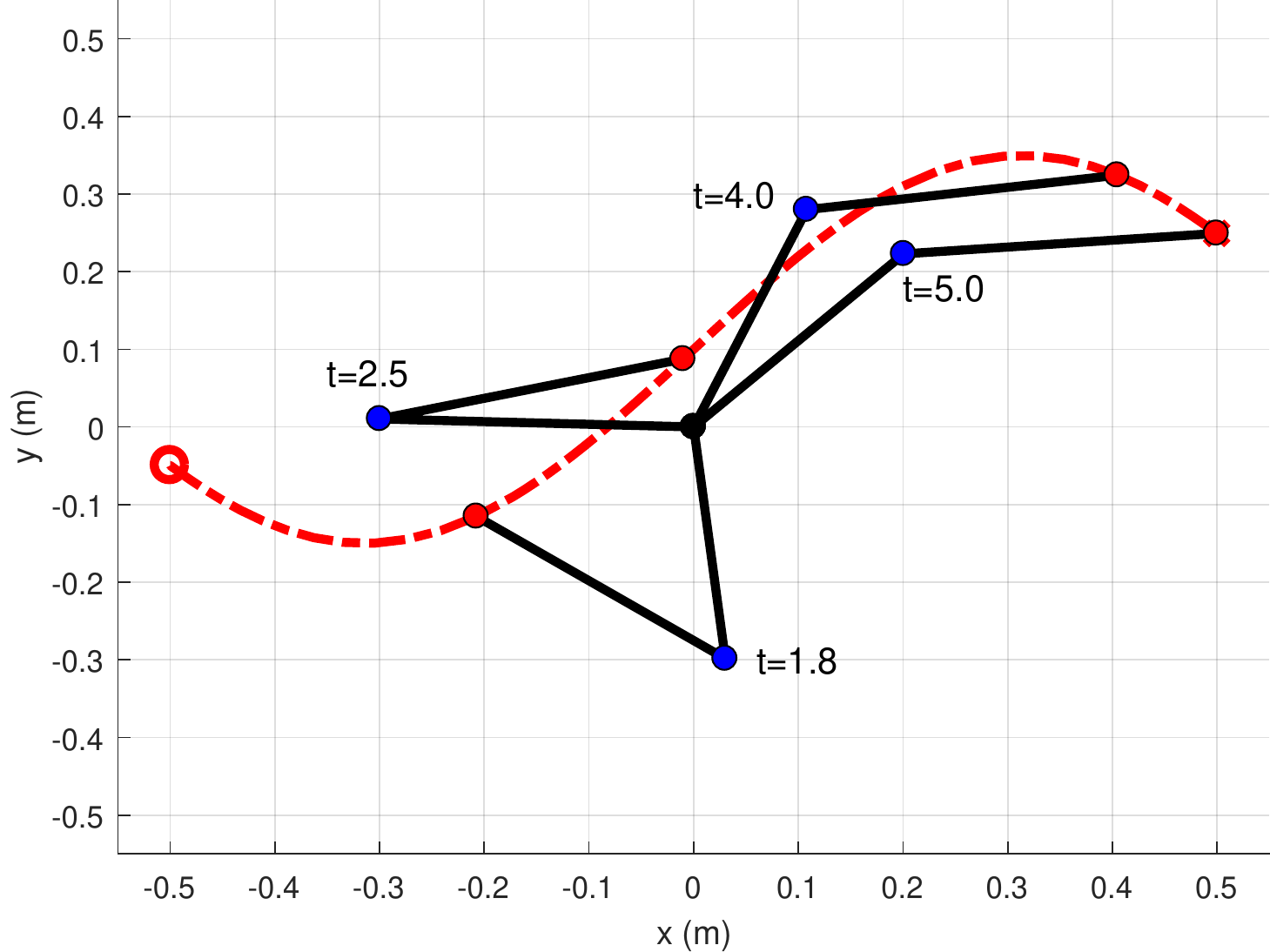}
  \caption{Desired trajectory in Cartesian coordinates
    (dotted red line) and robot arm position at four times.}
  \label{fig:LinkRobotFigure2}
\end{figure}

A state feedback law is implemented to track this trajectory.  The
input torque vector is $\tau = \bar{\tau} + u$ where
$u(t)=K(t)\, (\bar\eta(t) - \eta(t))$.  The feedback gain is
constructed via finite horizon, LQR design. Details on the trim
trajectory and state feedback design can be found in
\cite{seiler17arXiv,moore15}. Figure~\ref{fig:UncRobot} shows a block
diagram for the uncertain, nonlinear dynamics for the two-link robot
and state feedback.   The analysis aims to bound the $\mathcal{L}_2[0,T]$
norm of the tracking error $e(t)=\bar\eta(t) - \eta(t)$ in the presence
of uncertainty in the joint torques.

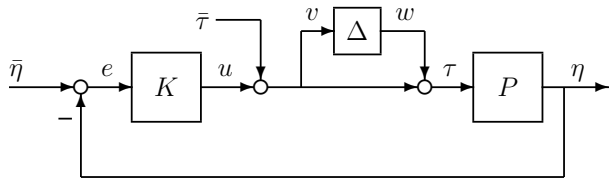
\begin{figure}[t]
\centering
\scalebox{0.85}{
  \begin{picture}(267,95)(3,-35)
    \thicklines
    % Reference state trajectory
    \put(3,0){\vector(1,0){29}}
    \put(3,5){$\bar\eta$}
    \put(35,0){\circle{6}}
    % Controller
    \put(38,0){\vector(1,0){20}}
    \put(44,5){$e$}
    \put(58,-15){\framebox(30,30){$K$}}
    \put(96,5){$u$}
    % Reference torque
    \put(88,0){\vector(1,0){24}}
    \put(115,0){\circle{6}}
    \put(115,30){\vector(0,-1){27}}
    \put(95,30){\line(1,0){20}}
    \put(86,26){$\bar\tau$}
    % Uncertainty
    \put(118,0){\vector(1,0){67}}
    \put(188,0){\circle{6}}
    \put(133,0){\line(0,1){25}}
    \put(133,25){\vector(1,0){15}}
    \put(135,30){$v$}
    \put(148,15){\framebox(20,20){$\Delta$}}
    \put(175,30){$w$}
    \put(168,25){\line(1,0){20}}
    \put(188,25){\vector(0,-1){22}}
    % Plant
    \put(191,0){\vector(1,0){19}}
    \put(196,5){$\tau$}
    \put(210,-15){\framebox(30,30){$P$}}
    \put(240,0){\vector(1,0){30}}
    \put(253,5){$\eta$}
    % Feedback
    \put(250,0){\line(0,-1){40}}
    \put(250,-40){\line(-1,0){215}}
    \put(35,-40){\vector(0,1){37}}
    \put(25,-14){\line(1,0){6}}
\end{picture}
} % End scalebox
\caption{Uncertain Nonlinear Model for Two-Link Robot}
\label{fig:UncRobot}
\end{figure}

Algorithm~\ref{alg:EA} was used to compute bounds on the
$\mathcal{L}_2[0,T]$ norm of the tracking error $e$. The causal,
uncertainty is assumed to be a full (2-by-2) with bounded induced
$\mathcal{L}_2[0,T]$ norm: $\| \Delta \|_{2\to 2} \le \beta$. The corresponding
IQC for this uncertainty is defined by
$M=\bsmtx \beta^2 I & 0 \\ 0 & -I \esmtx$. The IQC variable $\lambda$
is a scalar and hence Algorithm~\ref{alg:EA} reduces to bisection in
this case.\footnote{The ellipsoids simplify to intervals for scalar
  variables.  The interval is bisected based on the sign of the
  subgradient evaluated at the interval center.}
Algorithm~\ref{alg:EA} was run for uncertainty levels
$\beta = 0.05,0.1,\ldots,0.4$. The bisection was initialized with the
interval $[0,10]$ and was run until the optimality gap was less than
1\%.  The results are shown by the blue curve in
Figure~\ref{fig:JVsbeta1IQC}.  This took $\approx 290$sec for the
eight calculations on a standard laptop computer.

\begin{figure}[t] 
  \centering
  \includegraphics[scale=0.5]{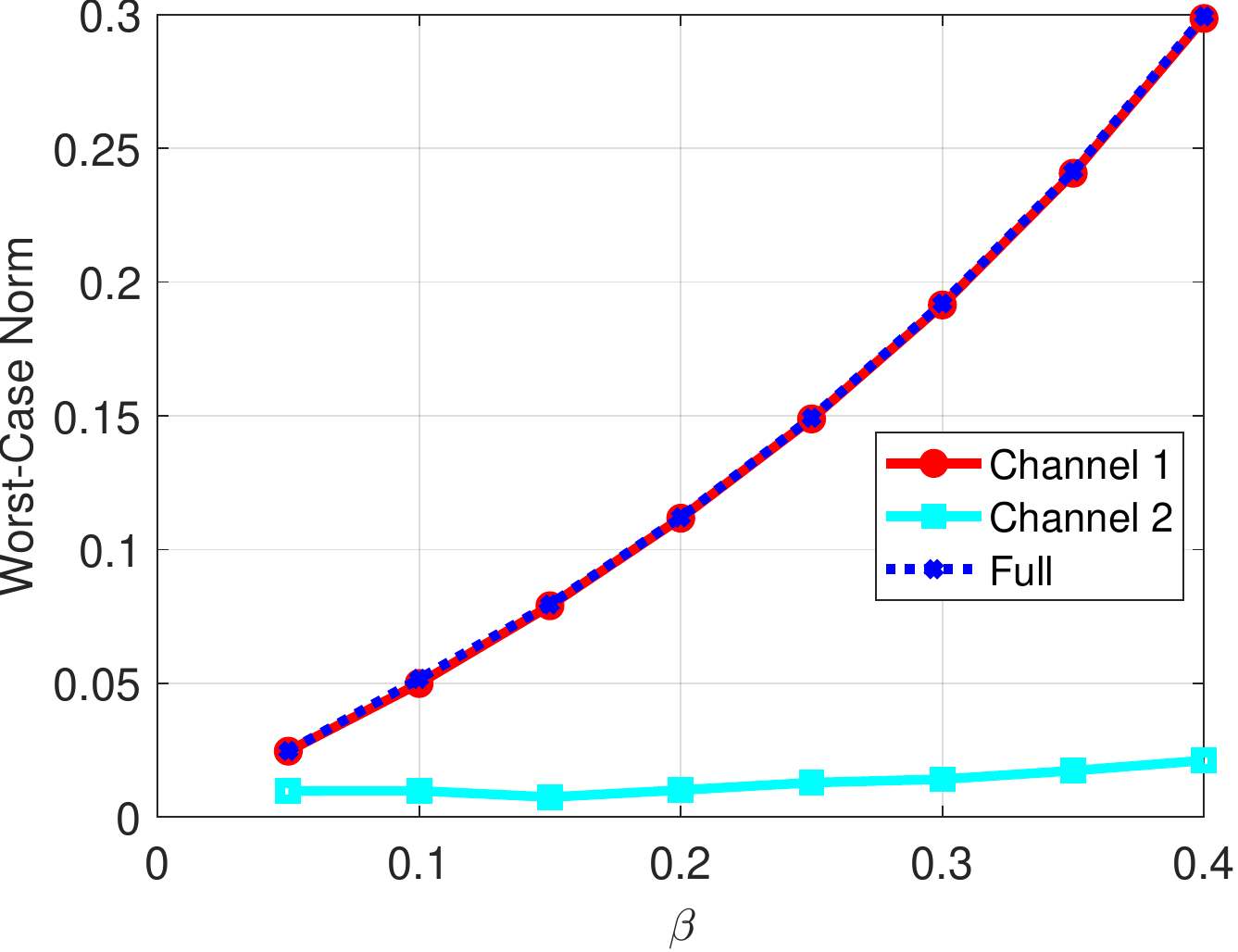}
  \caption{Worst-case norm versus uncertainty level $\beta$ for
    uncertainty structure that is full 2-by-2, on Channel 1 only,
    and Channel 2 only.}
  \label{fig:JVsbeta1IQC}
\end{figure}

The analysis was repeated under two additional assumptions on the
uncertainty: (i) uncertainty on channel 1,
$\Delta := \bsmtx \Delta_1 & 0 \\ 0 & 0 \esmtx$ with
$\| \Delta_1 \|_{2\to 2} \le \beta_1$, and (ii) uncertainty on channel
2, i.e.  $\Delta := \bsmtx 0 & 0 \\ 0 & \Delta_2 \esmtx$ with
$\| \Delta_2 \|_{2\to 2} \le \beta_2$.  Define
$E_1=\bsmtx 1 & 0 \\ 0&0\esmtx$ and $E_2=\bsmtx 0 & 0 \\ 0 &1\esmtx$ .
The corresponding IQCs for these two uncertainty sets are:
\begin{align}
\label{eq:M12} 
\mbox{(i)} \,
M_1=\bsmtx \beta_1^2 E_1 & 0 \\ 0 & -E_1 \esmtx, 
\, \mbox{(ii)} \,
M_2=\bsmtx \beta_2^2 E_2 & 0 \\ 0 & -E_2 \esmtx. 
\end{align}
Algorithm~\ref{alg:EA} was run for these two cases with same
uncertainty levels $\beta_i = 0.05,0.1,\ldots,0.4$ for $i=1,2$. The
algorithms again reduces to bisection as each case only has one IQC
variable.  The bisection was initialized with the interval $[0,10]$
and was run until the optimality gap was less than 1\%.  The results
are shown by the red and cyan curves in Figure~\ref{fig:JVsbeta1IQC}.
Cases (i) and (ii) took $\approx 240$sec and $\approx 200$sec,
respectively. A key observation from these results is that the results
for case (i) are very close to those for the full block uncertainty
while (ii) results in much smaller norm. This indicates that
uncertainty at the base of link 1 is more significant and uncertainty
at the base of link 2 has a negligible effect.

Finally, the analysis was performed with uncertainty in both channels
1 and 2 but no cross-coupling.  This corresponds to
$\Delta := \bsmtx \Delta_{1} & 0 \\ 0 & \Delta_2 \esmtx$ with
$\|\Delta_i\|_{2\to 2} \le \beta_i$ for $i=1,2$. This case has two
IQCs defined by $M_1$ and $M_2$ in \eqref{eq:M12} with the
corresponding $\beta_i$.  The analysis was performed with
$\beta_1=0.05$ and $\beta_2=0.8$, i.e.  much larger uncertainty in
channel 2.  Algorithm~\ref{alg:EA} was initialized with a sphere of
radius $R=20$. The ellipsoid algorithm was run to an optimality gap of
1\%.  This converged after 51 iterations and took $\approx 95$sec.  It
converged to $\lambda_1^*=0.0142$ and $\lambda_2^*=0.0447$ with
$\sqrt{\Jopt} =0.139$. 

For comparison, the original infinite-dimensional
formulation~\eqref{eq:DLMIOptim} was approximated by: (i) enforcing
the DLMI on a grid of 20 evenly spaced time points between $[0,5]$,
(ii) treating $P(t)$ at 10 evenly spaced time points as decision
variables, and (iii) using cubic splines to evaluate $P(t)$ and
$\dot{P}(t)$ at the DLMI grid points. Solving this (approximate)
finite-dimensional SDP gives $\lambda_{1,SDP}=0.0195$ and
$\lambda_{2,SDP}=0.0448$ with $\sqrt{\Jopt} \approx 0.147$. We then
solved the RDE using these IQC variables and obtained a bound of
$\sqrt{\Jopt} \le \sqrt{J(\lambda_{SDP})} = 0.140$. Solving the SDP
followed by the RDE only took $\approx 6$secs. Thus the standard
heuristic (gridding and basis functions) yields a nearly optimal
answer with much less computation time on this particular example.
However, no optimality gap is provided with this heuristic.

%-------------------------SECTION--------------------
\section{Conclusions}

This paper presented a method to analyze the robustness of an
uncertain nonlinear system along a finite-horizon trajectory.  The
approach relies on a linearization of the nominal nonlinear system
along the trajectory. A DLMI condition was then developed using the
linearized, uncertain LTV system.  This led to an infinite dimensional
convex optimization to assess robustness.  This optimization was then
converted to an equivalent finite dimensional optimization based on a
related Riccati Differential Equation.  The ellipsoidal method to
solve this optimization avoids heuristics often used to solve DLMIs,
e.g. time gridding. The approach was demonstrated by a two-link
robotic arm example. This included a comparison of the ellipsoid
method and heuristic gridding approaches. Future work will include
implementations for more general (dynamic, soft) IQCs.

%------------------------------------------------------
\bibliographystyle{IEEEtran}
\bibliography{ltvbib}

%-------------------------SECTION--------------------
\appendix
\section{Lemmas on RDEs}

The main text denoted the Riccati Differential Equation by
$RDE(t,Y)=0$ (dropping any dependence on $\lambda$). This appendix
focuses on the case where the right side is a matrix function
of time $W:[0,T] \to \Sm^{n_x+1}$. Specifically, $RDE(t,Y)=W$ denotes
the following RDE:
\begin{align}
\label{eqApp:RDEW}
\begin{split}
& \dot{Y} + A_a^\top Y + Y A_a + Q \\
& \hspace{0.2in} - (Y B_a + S) R^{-1} (Y B_a + S)^\top = W
\end{split}
\end{align}
This appendix provides supporting lemmas regarding the monotonicity
and continuity of the RDE solutions under such perturbations.

\begin{lem}
  \label{lemApp:RDEmon}
  Assume:
  \begin{enumerate}
  \item $R(t)\prec 0$ for all $t \in [0,T]$
  \item $W_i:[0,T]\to \Sm^{n_x+1}$ are functions $(i=1,2)$ that
    satisfy $W_1(t) \preceq W_2(t)$ $\forall t\in [0,T]$.
  \item $Y_i:[0,T]\to \Sm^{n_x+1}$ are differentiable functions
    ($i=1,2$) that satisfy $Y_1(T) \succeq Y_2(T)$ and
    $RDE(t,Y_i)=W_i$ $\forall t \in [0,T]$.
  \end{enumerate}
  Then $Y_1(t) \succeq Y_2(t)$ $\forall t\in [0,T]$.
\end{lem}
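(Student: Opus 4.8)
The plan is to prove the comparison result by converting it to a statement about the difference $D(t) := Y_1(t) - Y_2(t)$ and showing that $D$ satisfies a \emph{linear} matrix differential equation whose solution preserves the sign $D(T) \succeq 0$ as we integrate backward. First I would subtract the two Riccati equations $RDE(t,Y_i) = W_i$ to obtain a differential equation for $\dot D$. The quadratic terms $(Y_i B_a + S)R^{-1}(Y_i B_a + S)^\top$ do not subtract cleanly, but the standard trick is to write their difference as $(Y_1 B_a + S)R^{-1}(Y_1 B_a + S)^\top - (Y_2 B_a + S)R^{-1}(Y_2 B_a + S)^\top$ and expand it as $D B_a R^{-1}(Y_1 B_a + S)^\top + (Y_2 B_a + S) R^{-1} B_a^\top D$, i.e. a term that is \emph{linear} in $D$ with time-varying (but continuous, hence locally bounded) coefficients. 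Collecting everything, $D$ solves a linear matrix ODE of the form
\begin{align*}
\dot D + \hat A(t)^\top D + D \hat A(t) = W_2 - W_1
\end{align*}
for a suitable continuous matrix $\hat A(t)$ built from $A_a$, $B_a$, $S$, $R$, and $Y_1$ (or $Y_2$), with terminal condition $D(T) = Y_1(T) - Y_2(T) \succeq 0$.

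Next I would invoke the variation-of-constants formula for this linear equation. Let $\Phi(t,T)$ be the state-transition matrix associated with $\hat A$. Integrating backward from $t=T$ gives
\begin{align*}
D(t) = \Phi(T,t)^\top D(T) \Phi(T,t) + \int_t^T \Phi(s,t)^\top (W_1(s) - W_2(s)) \Phi(s,t) \, ds.
\end{align*}
The first term is a congruence transformation of $D(T) \succeq 0$, hence $\succeq 0$. By Assumption~2, $W_1(s) - W_2(s) \preceq 0$ for all $s$, so each integrand $\Phi(s,t)^\top(W_1(s)-W_2(s))\Phi(s,t)$ is $\preceq 0$, and the integral over $[t,T]$ is therefore $\succeq 0$ once we account for the sign. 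Adding the two nonnegative pieces yields $D(t) \succeq 0$ for all $t \in [0,T]$, which is the claim.

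The main obstacle I anticipate is the well-posedness and boundedness of the coefficient matrix $\hat A(t)$, which requires $R(t)^{-1}$ to exist and be continuous on all of $[0,T]$ — this is guaranteed by Assumption~1 ($R(t) \prec 0$ on the compact interval, so $R^{-1}$ is continuous and bounded) — and requires that both solutions $Y_1, Y_2$ actually exist and are differentiable on the entire interval $[0,T]$, which is exactly Assumption~3. Given these, $\hat A$ is continuous on $[0,T]$, so the linear ODE for $D$ has a unique solution defined on all of $[0,T]$ and $\Phi$ is well-defined and invertible everywhere; no finite-escape issues arise because the equation for $D$ is linear, not Riccati. A minor point to handle carefully is the precise algebraic form of $\hat A$ (it is not symmetric, and one must be consistent about whether $Y_1$ or $Y_2$ appears in the cross-term expansion), but either choice works and leads to the same congruence-plus-integral representation.
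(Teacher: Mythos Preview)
Your linearization step has a genuine gap that you dismiss too quickly. The expansion you write, $D B_a R^{-1}(Y_1 B_a + S)^\top + (Y_2 B_a + S) R^{-1} B_a^\top D$, is correct, but when you absorb it into the $A_a$-terms you do \emph{not} get $\hat A^\top D + D\hat A$ for a single $\hat A$. You get
\[
\dot D + \hat A_2^\top D + D\hat A_1 = W_1 - W_2,
\qquad \hat A_i := A_a - B_a R^{-1}(Y_i B_a + S)^\top,
\]
with $\hat A_1 \ne \hat A_2$ in general. The variation-of-constants solution of this Sylvester-type equation involves two different transition matrices, $\Phi_2(T,t)^\top D(T)\Phi_1(T,t)$, which is \emph{not} a congruence and does not preserve semidefiniteness. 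So ``either choice works'' is false as stated.

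The paper fixes this by not fully linearizing: expand the quadratic around $Y_2$ alone, so that $\hat A := \hat A_2$ appears symmetrically, and keep the leftover quadratic on the right:
\[
\dot D + \hat A^\top D + D\hat A = F,
\qquad F := W_1 - W_2 + D B_a R^{-1} B_a^\top D.
\]
Now the congruence formula applies, and $F(t)\preceq 0$ because $W_1\preceq W_2$ \emph{and} $R\prec 0$. This is where Assumption~1 is actually used---not merely to make $R^{-1}$ continuous as you suggest, but to force the residual quadratic to have the right sign. (An alternative repair, if you want to keep your purely linear equation, is to exploit the symmetry of $D$: transposing your equation and averaging gives $\dot D + \bar A^\top D + D\bar A = W_1 - W_2$ with $\bar A = \tfrac12(\hat A_1+\hat A_2)$, after which your congruence argument does go through.)
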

\begin{proof}
  Define the difference as $E:=Y_1-Y_2$ and note that $E(T)\succeq
  0$.  Subtract $RDE(t,Y_2)=W_2$ from $RDE(t,Y_1)=W_1$
  to obtain, after some algebra, the following expression:
  \begin{align*}
    \dot{E} + \hat{A}^\top E + E \hat{A} = F
  \end{align*}
  where
  \begin{align*}
    \hat{A}& :=A_a-B_aR^{-1}(Y_2B_a+S)^\top, \\
    F & := W_1-W_2 +  E B_a R^{-1} B_a^\top E.
  \end{align*}
  The assumptions on $W_1$, $W_2$, and $R$ imply that $F(t)\preceq 0$
  for all $t\in [0,T]$. Next, let $\Phi(t,\tau)$ be the state
  transition matrix associated with $\dot{x}(t) = -\hat{A}^\top(t) x(t)$. By
  Lemma 10.3 in \cite{bitmead12}, the solution $E$ can be expressed
  as:
  \begin{align}
    \label{eqApp:Esol}
     E(t) =&  \Phi(t,T) E(T) \Phi(t,T)^\top \\
    \nonumber
     & - \int_t^T \Phi(t,\tau) F(\tau)  \Phi(t,\tau)^\top  \, d\tau
  \end{align}
  Equation~\ref{eqApp:Esol} implies that $E(t)=P(t)-Y(t)\succeq 0$
  because $E(T)\succeq 0$ and $F(t) \preceq 0$ $\forall t\in [0,T]$.
  Hence $Y_2(t)\succeq Y_1(t)$ $\forall t\in [0,T]$.
\end{proof}

\begin{lem}
  \label{lemApp:RDEcont}
  Assume $(A_a,B_a,Q,R,S)$ are all continuous functions of time.
  Moreover, assume:
  \begin{enumerate}
  \item $R(t)\prec 0$ for all $t \in [0,T]$
  \item $Y_0:[0,T]\to \Sm^{n_x+1}$ is a differentiable function
    satisfying $Y_0(T) =0$ and $RDE(t,Y_0)=0$ $\forall t \in [0,T]$.
  \item $\{ \epsilon_k \}_{k=1}^\infty \subset \R$ are positive scalars
    satisfying $\epsilon_k \ge \epsilon_{k+1}$ $\forall k$ and
    $\lim_{k\to \infty} \epsilon_k = 0$.
  \item $Y_k:[0,T]\to \Sm^{n_x+1}$ are differentiable functions
    ($k=1,2,\ldots$) that satisfy $Y_k(T) =0$ and
    $RDE(t,Y_k)=-\epsilon_k \cdot I$ $\forall t \in [0,T]$.
  \end{enumerate}
  Then $\lim_{k\to \infty} \|Y_k(t)- Y_0(t)\|=0$ $\forall t\in [0,T]$.
\end{lem}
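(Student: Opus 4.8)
The plan is to combine the monotonicity of RDE solutions under forcing/terminal perturbations (Lemma~\ref{lemApp:RDEmon}) with a Gronwall estimate for the difference $Y_k - Y_0$. The key preliminary observation is that all of $Y_0, Y_1, Y_2, \dots$ are assumed to exist on the full interval $[0,T]$, so Lemma~\ref{lemApp:RDEmon} applies without worrying about finite escape times. Since $\epsilon_1 \ge \epsilon_k \ge 0$ for every $k\ge 1$ and all terminal values equal $0$, applying that lemma twice---once comparing the forcings $-\epsilon_k I \preceq 0$ and once comparing $-\epsilon_1 I \preceq -\epsilon_k I$---yields the order sandwich $Y_0(t) \preceq Y_k(t) \preceq Y_1(t)$ for all $t\in[0,T]$ and all $k\ge 1$. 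For symmetric matrices a two-sided order bound controls the spectral norm, so $\|Y_k(t)\| \le b := \sup_{t\in[0,T]}\max\{\|Y_0(t)\|,\|Y_1(t)\|\}$, which is finite because $Y_0$ and $Y_1$ are continuous on the compact interval $[0,T]$.

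With this uniform a-priori bound in hand I would turn the RDE into an ODE perturbation problem. Because $R$ is continuous and $R(t)\prec 0$ on the compact set $[0,T]$, there is $\delta>0$ with $R(t)\preceq -\delta I$, hence $\|R(t)^{-1}\|\le 1/\delta$ uniformly; likewise $A_a,B_a,Q,S$ are bounded on $[0,T]$. Write the RDE as $\dot Y_k = \mathcal{G}(t,Y_k) - \epsilon_k I$ with $\mathcal{G}(t,Y) := -A_a^\top Y - Y A_a - Q + (YB_a+S)R^{-1}(YB_a+S)^\top$; the map $Y\mapsto \mathcal{G}(t,Y)$ is quadratic with coefficients bounded uniformly in $t$, hence Lipschitz on the ball $\{\|Y\|\le b\}$ with a single constant $L$ depending only on $b$, $\delta$, and the above sup-norms. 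Setting $E_k := Y_k - Y_0$ and subtracting the two RDEs gives $\dot E_k = \mathcal{G}(t,Y_k) - \mathcal{G}(t,Y_0) - \epsilon_k I$ with $E_k(T)=0$, so $\|\dot E_k(t)\| \le L\|E_k(t)\| + \epsilon_k$. Integrating backward from $t=T$ and applying Gronwall's inequality yields $\|E_k(t)\| \le \frac{\epsilon_k}{L}\bigl(e^{L(T-t)}-1\bigr)$ for all $t\in[0,T]$. Since $\epsilon_k\to 0$, the right-hand side tends to $0$ uniformly in $t$, which gives the claimed (in fact uniform) convergence $\|Y_k(t)-Y_0(t)\|\to 0$.

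The main obstacle is really the first step: a single Lipschitz constant---and hence the Gronwall comparison---is only legitimate once one knows the $Y_k$ stay in a fixed bounded set, and establishing that bound is exactly where Lemma~\ref{lemApp:RDEmon} is needed; everything afterwards is routine. An alternative route that avoids Gronwall is to note that the sandwich together with Lemma~\ref{lemApp:RDEmon} applied to consecutive indices shows $k\mapsto Y_k(t)$ is nonincreasing in the positive-semidefinite order and bounded below by $Y_0(t)$, so a pointwise limit $Y_\infty(t)$ exists; the uniform bound on $Y_k$ forces a uniform bound on $\dot Y_k$ through the RDE, making $\{Y_k\}$ equicontinuous, so the convergence is actually uniform and $Y_\infty$ is continuous; passing to the limit in the integral identity $Y_k(t) = -\int_t^T\bigl(\mathcal{G}(s,Y_k(s))-\epsilon_k I\bigr)\,ds$ shows $Y_\infty$ is $C^1$ and solves $RDE(t,Y_\infty)=0$ with $Y_\infty(T)=0$, and uniqueness of solutions of this ODE then forces $Y_\infty = Y_0$.
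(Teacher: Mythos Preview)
Your proposal is correct and follows essentially the same scheme as the paper: use Lemma~\ref{lemApp:RDEmon} to obtain a uniform bound on the $Y_k$'s, subtract the two RDEs, and apply Gr\"onwall to the difference $E_k=Y_k-Y_0$. The only cosmetic differences are that you phrase the uniform bound via the two-sided sandwich $Y_0\preceq Y_k\preceq Y_1$ (which is in fact a slightly cleaner justification than the paper's one-sided statement) and invoke an abstract Lipschitz constant for $\mathcal{G}$, whereas the paper writes out the quadratic remainder explicitly and uses the state-transition integral representation before Gr\"onwall.
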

\begin{proof}
  The proof adapts similar arguments used to prove Theorem 3 in
  \cite{czornik00}. First, the horizon is finite $T<\infty$ and
  $[0,T]$ is a compact set.  Hence the solution $Y_1$ is uniformly
  bounded, i.e. there exists $c_1<\infty$ such that:
  \begin{align}
    \max_{t \in [0,T]} \| Y_1(t) \| \le c_1
  \end{align}
  where $\|Y_1(t)\|$ is the (matrix) induced $2$-norm of $Y_1(t)$.  By
  Lemma~\ref{lemApp:RDEmon}, $Y_1(t)\succeq Y_0(t)$ and
  $Y_1(t) \succeq Y_k(t)$ ($k=2,3,\ldots$) for all $t\in [0,T]$.  Thus
  $Y_0$ and $\{Y_k\}_{k=2}^\infty$ are also uniformly bounded by
  $c_1$.

  Next, define $E_k:=Y_k-Y_0$ and note that $E_k(T)=0$. Moreover,
  define:
  \begin{align*}
    \hat{A} & :=A_a-B_aR^{-1}(Y_0B_a+S)^\top,    \\
    F_k & := -\epsilon_k I +  E_k B_a R^{-1} B_a^\top E_k.
  \end{align*}
  Following the same arguments as in the proof of
  Lemma~\ref{lemApp:RDEmon}, the solution $E_k$ can be expressed as:
  \begin{align}
    \label{eqApp:Eksol}
    E_k(t) =  - \int_t^T \Phi(t,\tau) F(\tau)  \Phi(t,\tau)^\top  \, d\tau,
  \end{align}
  where $\Phi(t,\tau)$ is the state transition matrix for
  $-\hat{A}^\top$. The state transition matrix is also uniformly bounded:
  $\exists c_2$ such that $\|\Phi(t,\tau)\|\le c_2$
  $\forall t,\tau \in [0,T]$. Moreover, $\exists c_3$ such that such
  that $\|B_a(t)R^{-1}(t) B_a(t)\|\le c_3$ for all $t\in [0,T]$.  Use these
  constants and \eqref{eqApp:Eksol} to bound $E_k$ as follows:
  \begin{align*}
    \| E_k(t)\| & \le  \int_t^T c_2^2 \left( \epsilon_k 
                 +  c_3 \|E_k(\tau)\|^2 \right)  \, d\tau \\
    & \le c_2^2\epsilon_k T 
      + \int_t^T   2c_1 c_2^2 c_3 \|E_k(\tau)\|  \, d\tau 
  \end{align*}
  where the second line follows from
  $\| E_k(\tau)\| \le \|Y_k(\tau)\| + \|Y_0(\tau)\|$ combined with the
  uniform bound $c_1$ defined above. Apply Gr\"{o}nwall's inequality
  (Lemma 1 in Section 5.7 of \cite{vidyasagar93}) to obtain:
  \begin{align}
    \|E_k(t)\| \le c_2^2\epsilon_k T \cdot
      e^{ 2c_1 c_2^2 c_3 (T-t) }
  \end{align}
  Finally, $\lim_{k\to \infty} \epsilon_k = 0$ implies that
  $\lim_{k\to \infty} \|E_k(t)\| =0$ for all $t\in [0,T]$.
\end{proof}

\end{document}